\newcommand{\ignore}[1]{}
\tikzset{plain/.style={circle,fill=blue!20,draw,minimum size=0.5cm,inner sep=0pt},
}
\newtheorem{theorem}{Theorem}
\newtheorem{lemma}{Lemma}
\newtheorem{corollary}{Corollary}
\theoremstyle{definition}
\newtheorem{definition}{Definition}
\newtheorem{example}{Example}
\theoremstyle{remark}
\newtheorem*{remark}{Remark}
\tikzstyle{decision} = [diamond, draw, fill=blue!20, 
\tikzstyle{block} = [rectangle, draw, fill=blue!20, 
\tikzstyle{line} = [draw, -latex']
\tikzstyle{cloud} = [draw, ellipse,fill=red!20, node distance=3cm,
\title{Optimal Protocols for 2-Party Contention Resolution}
\author{Dingyu Wang}
\begin{document}
\maketitle

\begin{abstract}
\emph{Contention Resolution} is a fundamental symmetry-breaking problem 
in which $n$ devices must acquire
temporary and exclusive access to some \emph{shared resource},
without the assistance of a mediating authority.  For example,
the $n$ devices may be sensors that each need to transmit a single packet of data over a broadcast channel.  In each time step, devices
can (probabilistically) choose to acquire the resource or remain idle; if exactly one device attempts to acquire it, it succeeds,
and if two or more devices make an attempt, none succeeds.  The complexity of the problem depends heavily on what types of \emph{collision detection} are available.  In this paper we consider \emph{acknowledgement-based protocols}, in which devices \underline{only} learn whether their own attempt succeeded or failed; they receive no other feedback from the environment whatsoever, 
i.e., whether other devices attempted to acquire the resource, succeeded, or failed.

Nearly all work on the Contention Resolution problem evaluated the performance of algorithms \emph{asymptotically}, 
as $n\rightarrow \infty$.  In this work we focus on the simplest case of $n=2$ devices, but look for \underline{\emph{precisely}} optimal algorithms.  We design provably optimal algorithms under three natural cost metrics: minimizing the expected average of the waiting times ({\sc avg}), the expected waiting time until the first device acquires the resource ({\sc min}), and the expected time until the last device acquires the resource ({\sc max}).  
We first prove that the optimal algorithms for $n=2$ are \emph{periodic} in a certain sense, and therefore have finite descriptions, then we design optimal algorithms under all three objectives.

\begin{itemize}
    \item[{\sc avg}.] The optimal contention resolution algorithm under the {\sc avg} objective has expected cost $\sqrt{3/2} + 3/2 \approx 2.72474$. 
    \item[{\sc min}.] The optimal contention resolution algorithm under the {\sc min} objective has expected cost $2$.  (This result can be proved in an ad hoc fashion, and may be considered folklore.)
    \item[{\sc max}.] The optimal contention resolution algorithm under the {\sc max} objective has expected cost $1/\gamma \approx 3.33641$, where $\gamma\approx 0.299723$ is the smallest root of $3x^3 - 12x^2 + 10x -2$.\footnote{We may also express $\gamma$ in radical form: $\gamma = -\frac{1}{6} \left(1-i \sqrt{3}\right) \sqrt[3]{13+i \sqrt{47}}+\frac{4}{3}-\frac{1+i \sqrt{3}}{\sqrt[3]{13+i \sqrt{47}}}$.}
\end{itemize}

\end{abstract}

\section{Introduction}\label{sect:introduction}

The goal of a contention resolution scheme is to allow multiple devices to eventually 
obtain exclusive access to some shared resource.  In this paper\footnote{This is the full version of \cite{wang2021contention}. } we will use often use 
the terminology
of one particular application, namely, 
wireless devices that wish to broadcast messages on a multiple-access channel.
However, contention resolution schemes are used in a variety of areas~\cite{metcalfe1976ethernet,rajwar2001speculative,jacobson1988congestion}, 
not just wireless networking.  We consider a model of contention resolution that is distinguished by the following features.
\begin{description}
    \item[Discrete Time.] Time is partitioned into discrete \emph{slots}.  It is the goal of every device to obtain exclusive access to the channel for exactly one slot, after which it no longer participates in the protocol.  
    We assume that all $n$ devices begin at the same time, and therefore agree on slot zero.
    (Other work considers an infinite-time model in which devices are injected adversarially~\cite{BenderFGY19,ChangJP19,AwerbuchRS08}, 
    or according to a Poisson distribution~\cite{MoselyH85,TsybakovM78} 
    with some constant mean.)
    \item[Feedback.] At the beginning of each time slot each device can choose to either transmit its message or remain idle.  If it chooses to idle, it receives no feedback from the environment; if it chooses to transmit, it receives a signal indicating whether the transmission was successful (all other devices remained idle).  
    (``Full sensing'' protocols like~\cite{MoselyH85,TsybakovM78,BenderFGY19,ChangJP19,AwerbuchRS08}, 
    in contrast, depend on receiving ternary feedback at each time slot indicating whether there was no transmission, 
    some successful transmission, or a collision.)
    \item[Noiseless operation.] The system is errorless; 
    there is no environmental noise.
    \item[Anonymity.] Devices are indistinguishable and run the same algorithm, but can break symmetry by generating (private) random bits.
\end{description}

There are many ways to measure the time-efficiency of 
contention resolution protocols.
In infinite-time models, we want to avoid deadlock~\cite{Aldous87,BenderFHKL04,BenderFHKL05,BenderFGY19,BenderKPY18,ChangJP19}, 
minimize the latency of devices
in the system, and generally make productive use of a (large) 
constant fraction of the slots~\cite{BenderFGY19,BenderKPY18,ChangJP19}.
When all $n$ devices begin at the same time~\cite{BenderFHKL04,BenderFHKL05}, there are still several
natural measures of efficiency.  In this paper we consider three: minimizing
the time until the \emph{first} successful transmission ({\sc min}), 
the \emph{last} successful transmission time ({\sc max}, a.k.a.~the \emph{makespan}), and the \emph{average} transmission time ({\sc avg}).

\subsection{Prior Work}
Classic infinite-time protocols like ALOHA \cite{abramson1970aloha} 
and binary exponential backoff algorithms~\cite{MetcalfeB76,kwak2005performance} 
are simple but suffer from poor worst case performance 
and eventual deadlock~\cite{Aldous87,BenderFHKL04,BenderFHKL05}, even
under \emph{non}-adversarial injection rates, e.g., Poisson
injection rates with arbitrary small means.
These are \emph{acknowledgement-based} protocols which do not require
constant (ternary) channel feedback.
One line of work aimed to achieve deadlock-freeness 
under Poisson arrivals~\cite{Capetanakis79,Gallager78,MoselyH85,TsybakovM78},
assuming ternary channel feedback.
The maximum channel usage rate is known to be between 0.48776~\cite{MoselyH85,TsybakovM78}
and 0.5874~\cite{TsybakovM81}.  A different line of work aimed at achieving deadlock-freeness and constant rate of efficiency under \emph{adversarial} injections and possibly adversarial \emph{jamming}, also assuming ternary feedback.  See \cite{BenderFGY19,ChangJP19,AwerbuchRS08} for robust 
protocols that can tolerate a jamming adversary.
One problem with both of these lines of work is that all devices must monitor
the channel constantly (for the ternary silence/success/collision feedback).  
Bender et al.~\cite{BenderKPY18} considered adversarial injection rates
and showed that it is possible to achieve a constant efficiency rate 
while only monitoring/participating in $O(\log(\log^* n))$ time slots.
This was later shown to be optimal~\cite{ChangKPWZ17}.

When all $n$ devices start at the same time slot ($n$ unknown), 
we have a pretty 
good understanding of the {\sc avg}, {\sc min}, and {\sc max} objectives.
Here there are still variants of the problem, 
depending on whether the protocol is 
full-sensing (requiring ternery feedback) or merely acknowledgement-based.
Willard~\cite{Willard86} and Nakano and Olariu~\cite{NakanoO02} 
gave full sensing protocols for the {\sc min} objective
when $n$ is unknown that takes time $O(\log\log n + \log f^{-1})$ with probability $1-f$, which is optimal.
The \emph{decay} algorithm~\cite{Bar-YehudaGI92} is an acknowledgement-based protocol
for the {\sc min} objective that runs in $O(\log n\log f^{-1})$ time with probability $1-f$,
which is also known to be optimal~\cite{Newport14}.  When $n$ is unknown,
binary exponential backoff achieves optimal $O(n)$ time under the {\sc avg} objective,
but suboptimal $\Theta(n\log n)$ time under the {\sc max} objective~\cite{BenderFHKL04,BenderFHKL05}.
The \emph{sawtooth} protocol of 
Bender et al.~\cite{BenderFHKL04,BenderFHKL05} is optimal $O(n)$ under
both {\sc avg} and {\sc max}; it is acknowledgement-based.

\subsection{New Results}

In this paper we consider what seems to be the \emph{simplest non-trivial symmetry breaking problem}, namely, resolving contention among two parties ($n=2$) via an acknowledgement-based protocol.  The \emph{asymptotic} complexity of this problem is not difficult to derive:
$O(1)$ time suffices, under any reasonable objective function, and $O(\log f^{-1})$ time 
suffices with probability $1-f$.  However, our goal is to discover 
\underline{\emph{precisely}} optimal algorithms.

We derive the optimal protocols for the {\sc avg}, {\sc min}, and {\sc max} objectives, in expectation, which are produced below.  The optimal {\sc min} protocol is easy to obtain using \emph{ad hoc} arguments;
it has expected cost $2$.  However, the optimal protocols for {\sc avg} and {\sc max} require a more principled, rigorous approach to the problem.
We show that the protocol minimizing {\sc avg} has expected cost $\sqrt{3/2}+3/2\approx 2.72474$,
and that the optimal protocol minimizing {\sc max} has expected
cost $1/\gamma \approx 3.33641$, 
where $\gamma\approx 0.299723$ is the unique root of 
$3x^3 - 12x^2 + 10x -2$ in the interval $[1/4,1/3]$.

\begin{center}

\framebox{
\begin{minipage}{6.3in}
{\bf {\sc avg}-Contention Resolution:}
\begin{description}
    \item[Step 1.] Transmit with probability $\frac{4-\sqrt{6}}{3}\approx 0.516837$.  If successful, halt; if there was a collision, repeat Step 1; otherwise proceed to Step 2.
    \item[Step 2.] Transmit with probability $\frac{1+\sqrt{6}}{5}\approx 0.689898$.  If successful, halt; if there was a collision, go to Step 1; otherwise proceed to Step 3.
    \item[Step 3.] Transmit with probability 1. If successful, halt; otherwise go to Step 1.
\end{description}
\end{minipage}
}

\medskip

\framebox{
\begin{minipage}{6.3in}
{\bf {\sc min}-Contention Resolution:}
\begin{itemize}
    \item In each step, transmit with probability 1/2 until successful.
\end{itemize}
\end{minipage}
}

\medskip

\framebox{
\begin{minipage}{6.3in}
{\bf {\sc max}-Contention Resolution:}
\begin{description}
    \item[Step 1.] Transmit with probability $\alpha\approx 0.528837$, where $\alpha$ is the unique root 
    of $x^3 + 7x^2 - 21x + 9$ in $[0,1]$. 
    If successful, halt; if there was a collision, repeat Step 1; otherwise proceed to Step 2.
    \item[Step 2.] Transmit with probability $\beta\approx 0.785997$, where $\beta$ is the unique root of $4x^3 - 8x^2 + 3$ in $[0,1]$. 
    If successful, halt; if there was a collision, go to Step 1; otherwise proceed to Step 3.
    \item[Step 3.] Transmit with probability 1. 
    If successful, halt; otherwise go to Step 1.
\end{description}
\end{minipage}
}
\end{center}

One may naturally ask: what is the point of understanding Contention Resolution problems with $n=O(1)$ devices?  The most straightforward answer is that
in some applications, contention resolution instances between
$n=O(1)$ devices are commonplace.\footnote{For a humorous example, 
consider the Canadian Standoff problem \url{https://www.cartoonstock.com/cartoonview.asp?catref=CC137954}.}
However, even if one is only interested in the asymptotic case of
$n\rightarrow \infty$ devices, understanding how to resolve $n=O(1)$ 
optimally is essential.  For example, the protocols of~\cite{Capetanakis79,Gallager78,MoselyH85,TsybakovM78} work by
repeatedly isolating subsets of the $n'$ active devices, where $n'$
is Poisson distributed with mean around 1.1, then resolving conflicts
within this set (if $n'>1$) using a near-optimal procedure.
The channel usage rate of these protocols ($\approx 0.48776$) depends
critically on the efficiency of Contention Resolution 
among $n'$ devices, where $\mathbb{E}[n']=O(1)$.
Moreover, \emph{improving} these algorithms will likely require a much
better understanding of $O(1)$-size contention resolution.

\paragraph{Organization.}
In Section~\ref{sect:problemformulation} we give a formal
definition of the model and state Theorem~\ref{thm:existence}
on the \emph{existence} of an optimal protocol for any reasonable
objective function.  In Section~\ref{sect:twoparty} we prove another structural
result on optimal protocols for $n=2$ devices under the {\sc avg}, {\sc min}, and {\sc max} objectives (Theorem~\ref{thm:recurrent}), 
and use it to characterize what the optimal protocols 
for {\sc avg} (Theorem~\ref{thm:min_average}), {\sc min} (Theorem~\ref{thm:min_short}),
and {\sc max} (Theorem~\ref{thm:min_long}) should look like.  
Corollary~\ref{cor:average} derives that {\bf {\sc avg}-Contention Resolution} is the optimal protocol under the {\sc avg} objective,
and Corollary~\ref{cor:long} does the same for {\bf {\sc max}-Contention Resolution} under {\sc max}.
The proofs of Theorems~\ref{thm:existence} and \ref{thm:recurrent}
and Corollaries~\ref{cor:average} and~\ref{cor:long} appear in the Appendix.

\section{Problem Formulation}\label{sect:problemformulation}

After each time step the channel issues responses to the devices from the set
$\mathcal{R} = \{0,1,2_+\}$.  If the device idles, it always receives $0$. 
If it attempts to transmit, it receives $1$ if successful and $2_+$ if unsuccessful.
A \emph{history} is word over $\mathcal{R}^*$.  We use exponents for repetition and $*$ as short for $\mathcal{R}^*$; e.g., the history $0^3 2_+^2$ is short for $0002_+2_+$ and $* 1 *$ is the set of all histories containing a $1$. 
The notation $a\in w$ means that symbol $a$ has at least one occurrence in word $w$.

Devices choose their action (transmit or idle) 
at time step $t \in \mathbb{N}$ and receive feedback at time $t+0.5$.
A \emph{policy} is a function $f$ for deciding the probability of 
transmitting.
Define $\mathcal{F}=\{f:\mathcal{R}^* \to [0,1]\mid \forall w\in\mathcal{R}^*, 1\in w\implies f(w)=0\}$ to be the set of all proper policies, i.e.,
once a device is successful ($1\in w$), it must halt ($f(w)=0$).\footnote{A policy may have no finite representation, and therefore may not be an \emph{algorithm} in the usual sense.}
Every particular policy $f\in\mathcal{F}$ induces a distribution
on decisions $\{D_{k,t}\}_{k\in [n], t\in\mathbb{N}}$ 
and responses $\{R_{k,t}\}_{k\in[n],t\in \mathbb{N}}$, 
where $D_{k,t}=1$ iff the $k$th device transmits at time $t$
and $R_{k,t} \in \mathcal{R}$ is the response received by the $k$th device
at time $t+0.5$.  In particular,
\begin{align}
&\mathbb{P}(D_{k,t}=1\mid R_{k,0}R_{k,1}\cdots R_{k,t-1}=h)=f(h)\label{eq:send_prob},\\
&R_{k,{t}}(w)=\begin{cases}
0,&D_{k,t}(w)=0\\
1,&(D_{k,t}(w)=1) \land (\forall j\neq k, D_{j,t}(w)=0)\\
2^+,&(D_{k,t}(w)=1) \land (\exists j\neq k, D_{j,t}(w)=1)
\end{cases}\label{eq:response}
\end{align}
Define $X_i$ to be the random variable of the number of time
slots until device $i$ succeeds.  Note that since we number the slots
starting from zero, 
\begin{align*}
X_i &= 1+\min \{t\geq 0\mid R_{i,t} = 1\}.
\end{align*}
Note that $\{X_i\}_{i\in [n]}$ are identically distributed 
but not independent.  For example, minimizing the average
of $\{X_i\}_{i\in [n]}$ is equivalent to minimizing $X_1$ since:
\begin{align*}
    \mathbb{E}\frac{\sum_{i=1}^n X_i}{n}=\frac{\sum_{i=1}^n\mathbb{E}X_i}{n}=\frac{n\mathbb{E}X_1}{n}=\mathbb{E}X_1.
\end{align*}

\subsection{Performance Metrics and Existence Issues}

For our proofs it is helpful to assume the existence of an 
\emph{optimal protocol}
but it is not immediate that there \emph{exists} such 
an optimal protocol.  (Perhaps there is just an infinite succession 
of protocols, each better than the next.)  In Appendix \ref{appendix:proof_existence}
we prove that optimal protocols exist for all ``reasonable'' objectives.
A \emph{cost function} $T : \mathbb{Z}_+^n \rightarrow \mathbb{R}_+$
is one that maps the vector of device latencies to a single (positive) cost.
The objective is to minimize $\mathbb{E}T(X_1,\ldots,X_n)$.

\begin{definition}[Informal] \label{def:informal_reasonable}
A function $T:\mathbb{Z}_+^n\to \mathbb{R}^+$ is \emph{reasonable} if for any $s>0$ there exists some $N>0$ such that $T(x_1,\ldots,x_n)<s$ 
can be known if each of $x_1,x_2,\ldots,x_n$ is either known 
or known to be greater than $N$.
\end{definition}

For example, $T_1(x_1,\ldots,x_n)=\frac{\sum_{k=1}^n x_k}{n}$ ({\sc avg}),
$T_2(x_1,\ldots,x_n)=\min(x_1,\ldots,x_n)$ ({\sc min}),
and $T_3(x_1,\ldots,x_n)=\max(x_1,\ldots,x_n)$ ({\sc max})
are all reasonable, as are all $\ell_p$ norms, etc.

\begin{theorem}\label{thm:existence}
Given the number of users $n$ and a reasonable objective function $T$, 
there exists an optimal policy $f^* \in \mathcal{F}$ that minimizes $\mathbb{E}T(X_1,X_2,\ldots,X_n)$.
\end{theorem}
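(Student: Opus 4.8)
The plan is to establish existence by a compactness argument on the space of policies, combined with the "reasonable" hypothesis to reduce the infinite-horizon objective to a finite-horizon one. First I would observe that the policy space $\mathcal{F}$, as a subset of $[0,1]^{\mathcal{R}^*}$ (a countable product of compact intervals), is compact in the product topology by Tychonoff's theorem; the constraint $1\in w\implies f(w)=0$ is closed, so $\mathcal{F}$ itself is compact. Thus it suffices to show that the map $f\mapsto \mathbb{E}_f\,T(X_1,\ldots,X_n)$ is lower semicontinuous on $\mathcal{F}$ (or continuous on a suitable truncation), since a lower semicontinuous function on a compact set attains its infimum.

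The obstacle is that $\mathbb{E}_f\,T$ is not obviously continuous: it is an infinite sum over outcomes, and a priori a change in $f$ on "deep" histories could matter. This is exactly where Definition~\ref{def:informal_reasonable} is used. Fix $\epsilon>0$. Because $T$ is reasonable, there is an $N$ such that $T(x_1,\ldots,x_n)$ is determined up to precision $\epsilon$ once each $x_i$ is either known exactly or known to exceed $N$. I would then consider the event $E_N$ that all devices have succeeded by time $N$ (equivalently $\max_i X_i\le N$). On the complement, the cost contribution can be controlled: I would argue that for any policy, $\mathbb{P}_f(\max_i X_i>N)$ is small uniformly — actually this needs care, since a bad policy (e.g. $f\equiv 1/2$ forever can still fail to separate the two devices with positive probability at every step, but the failure probability decays geometrically regardless of $f$ once we note that in any slot where at least one device transmits with probability bounded away from $0$ and $1$... ) — so instead I would restrict attention to a compact subset $\mathcal{F}'\subseteq\mathcal{F}$ of policies that are "not obviously bad," for instance those whose expected cost is at most that of the trivial protocol $f\equiv 1/2$; one shows directly that for such $f$, $\mathbb{P}_f(\max_i X_i>N)\to 0$ as $N\to\infty$ uniformly over $\mathcal{F}'$, using that $T$ dominates a fixed positive multiple of (an indicator of) lateness. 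The truncated objective $\mathbb{E}_f\big[T(X_1,\ldots,X_n)\,;\,E_N\big]$ depends only on the values of $f$ on the finitely many histories of length $<N$, hence is a polynomial in finitely many coordinates and therefore continuous on $\mathcal{F}$.

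Putting the pieces together: write $\mathbb{E}_f\,T = \mathbb{E}_f[T\,;\,E_N] + \mathbb{E}_f[T\,;\,E_N^c]$. The first term is continuous in $f$; the second is nonnegative and, on $\mathcal{F}'$, bounded by a quantity tending to $0$ as $N\to\infty$ uniformly in $f$. Hence $\mathbb{E}_f\,T$ is a uniform limit of continuous functions on the compact set $\mathcal{F}'$, so it is continuous there, and therefore attains its minimum at some $f^*\in\mathcal{F}'$. Since $\mathcal{F}'$ was chosen to contain all policies at least as good as a fixed baseline, $f^*$ is optimal over all of $\mathcal{F}$, giving the claimed optimal policy. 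The main thing to get right is the uniform tail bound $\sup_{f\in\mathcal{F}'}\mathbb{P}_f(\max_i X_i>N)\to 0$; once that is in hand the rest is routine.
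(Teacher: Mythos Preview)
Your high-level strategy---compactness of $\mathcal{F}$ in the product topology plus some form of lower semicontinuity of $f\mapsto\mathbb{E}_f T$---is exactly the paper's. The divergence, and the gap, is in how you try to obtain the semicontinuity. You split $\mathbb{E}_f T=\mathbb{E}_f[T;E_N]+\mathbb{E}_f[T;E_N^c]$ with $E_N=\{\max_i X_i\le N\}$ and assert that the tail term vanishes uniformly over $\mathcal{F}'=\{f:\mathbb{E}_fT\le C\}$. But ``reasonable'' gives you nothing of the sort: it says the event $\{T<s\}$ is determined by finitely many time slots, not that $T$ dominates any function of $\max_i X_i$. For $T=\min$ (which the paper explicitly lists as reasonable), a bound on $\mathbb{E}_f[\min]$ does not control $\mathbb{P}_f(\max>N)$ or $\mathbb{E}_f[\min;\max>N]$ uniformly; one can push the second success arbitrarily far out while keeping $\mathbb{E}_f[\min]$ near its optimum, and even pointwise decay of the tail to $0$ need not be uniform across $\mathcal{F}'$. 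There is also a circularity: calling $\mathcal{F}'$ ``compact'' presupposes that $\mathbb{E}_fT$ is lower semicontinuous, since $\mathcal{F}'$ is a sublevel set of exactly the function whose LSC you are trying to establish.

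The paper sidesteps both issues by using the layer-cake representation $\mathbb{E}_fT=\int_0^\infty\mathbb{P}_f(T\ge s)\,ds$ directly. For each fixed $s$, ``reasonable'' says $\{T<s\}$ is determined by the first $N_s$ slots, so $\mathbb{P}_f(T<s)$ depends on only finitely many coordinates $f(w)$ and is continuous in $f$; the paper makes this explicit by coupling all policies on a common ``random board'' and bounding the probability that the first $N_s$ deductions differ. Then Fatou's lemma yields $\mathbb{E}_{f^*}T\le\liminf_j\mathbb{E}_{f_j}T=J$ along any minimizing, pointwise-convergent sequence---no uniform tail bound on $\max_i X_i$ and no compactness of $\mathcal{F}'$ required. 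If you want to repair your outline, replace the $E_N$-truncation by this layer-cake-plus-Fatou step; that is the missing idea.
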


\section{Contention Resolution Between Two Parties}\label{sect:twoparty}

In this section we restrict our attention to the case $n=2$.
One key observation that makes the $n=2$ case special
is that whenever one device receives $2_+$ (collision) feedback, it knows
that its history and the other device's history are identical.
For many reasonable objective functions the best response
to a collision is to restart the protocol.  
This is proved formally in Theorem~\ref{thm:recurrent} for 
a class of objective functions that includes {\sc avg}, {\sc min}, and {\sc max}.
See Appendix \ref{appendix:proof_recurrent} for proof.

\begin{theorem}\label{thm:recurrent}
Let $n=2$, $T$ be a reasonable objective function, 
and $f$ be an optimal policy for $T$.
Another policy $f^*$ is defined as follows.
\begin{align*}
    f^*(0^k)&=f(0^k),\quad \forall k\in \mathbb{N}\\
    f^*(*2_+0^k)&=f(0^k),\quad \forall k\in \mathbb{N}\\
    f^*(*1*)&=0
\end{align*}
If $T(x+c,y+c)=T(x,y)+c$ for any $c$ (scalar additivity), 
then $f^*$ is also an optimal policy for $T$.
\end{theorem}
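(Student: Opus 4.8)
The plan is to show that $f^*$ does not do worse than $f$ (it cannot do better, since $f$ is optimal). The key structural fact, noted just before the theorem, is that when a device receives a $2_+$ at some time $t$, the two devices have transmitted simultaneously at time $t$ and — since they run the same policy with the same history up to that point — they have had \emph{identical} histories $0^t$ through time $t$. Thus conditioned on a first collision occurring at time $t$, the joint distribution of the devices' future (relative to time $t+1$) under the original policy $f$ is exactly the joint distribution of a fresh execution of $f$ with both devices at history $0^{t+1}$ versus $f$ restricted to the post-collision subtree. The policy $f^*$ agrees with $f$ on the collision-free branch $0^k$, copies the fresh-start behavior after any collision, and (legally) halts after a success; so the only place $f^*$ and $f$ can differ is in what happens after the \emph{first} collision.

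First I would set up notation for the ``first collision time'' $\tau = \min\{t : R_{1,t} = 2_+\}$ (with $\tau = \infty$ if no collision ever occurs), and condition on $\tau$. On the event $\{\tau = \infty\}$, the trajectories of $f$ and $f^*$ are identical, so they contribute equally to the expected cost. On the event $\{\tau = t\}$, both devices have waited $t+1$ slots already (neither has succeeded — a collision is not a success), and by the identical-history observation the conditional law of $(X_1 - (t+1), X_2 - (t+1))$ under $f$ equals the law of $(X_1', X_2')$ where $(X_1', X_2')$ is governed by the ``post-collision'' sub-policy of $f$. By scalar additivity, $\mathbb{E}[T(X_1, X_2) \mid \tau = t] = (t+1) + \mathbb{E}[T(X_1', X_2') \mid \tau = t]$. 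Under $f^*$, the same conditioning gives $(t+1) + \mathbb{E}[T(Y_1, Y_2)]$ where $(Y_1, Y_2)$ is a fresh execution of $f$ (i.e.\ the policy $f$ run from scratch), and this expectation does not depend on $t$.

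The crux is then to show that running $f$ from scratch is at least as good, \emph{conditioned on the first action being a collision}, as continuing with $f$'s post-collision sub-policy. This is where I expect the main obstacle to lie, and where an exchange/optimality argument is needed. The idea: consider the alternative policy $g$ that behaves like $f$ everywhere \emph{except} that after the first collision it restarts (i.e.\ $g = f^*$); I want $\mathbb{E}T$ under $g$ $\le \mathbb{E}T$ under $f$. Suppose not; then $\mathbb{E}[T(X_1', X_2') \mid \tau = t]$ — the cost of $f$'s post-collision continuation — is strictly less than $\mathbb{E}[T(Y_1, Y_2)]$, the cost of a fresh run of $f$. But the post-collision continuation is itself \emph{a} valid policy run from the all-idle history (both devices are symmetric with history $0^{t+1}$, which is behaviorally the same starting point as history $\emptyset$ for the purposes of the future, since the policy only ever reads its own history and the environment is memoryless); hence its cost is an achievable expected cost for the $n=2$ problem, and if it were strictly smaller than $\mathbb{E}[T(Y_1,Y_2)]$ we could use it as the global policy and beat $f$ — contradicting optimality of $f$. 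Care is needed here because ``fresh run of $f$'' means $f$ with the cost measured from time $0$, whereas $\mathbb{E}[T(Y_1,Y_2)]$ as it appears under $f^*$ is also exactly $\mathbb{E}T$ under $f$ (the quantity we are comparing against); reasonableness of $T$ guarantees these expectations are finite so the inequalities can be manipulated without $\infty - \infty$ issues. Assembling: $\mathbb{E}T(f^*) = \mathbb{P}(\tau=\infty)\,\mathbb{E}[T \mid \tau=\infty] + \sum_t \mathbb{P}(\tau=t)\big[(t+1) + \mathbb{E}T(f)\big] \le \mathbb{P}(\tau=\infty)\,\mathbb{E}[T\mid \tau=\infty] + \sum_t \mathbb{P}(\tau=t)\big[(t+1) + \mathbb{E}[T(X_1',X_2')\mid\tau=t]\big] = \mathbb{E}T(f)$, so $f^*$ is optimal. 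Throughout I would double-check that $f^* \in \mathcal{F}$, i.e.\ it halts after success, which is immediate from the defining clause $f^*(*1*) = 0$.
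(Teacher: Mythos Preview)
Your overall strategy---condition on the time $\tau$ of the first collision, use scalar additivity to peel off $(\tau+1)$, and invoke optimality of $f$ to bound the continuation cost---is exactly the paper's approach. The gap is in your identification of what $f^*$ does after a collision.

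You assert that, conditioned on $\{\tau=t\}$, the continuation cost under $f^*$ is $\mathbb{E}[T(Y_1,Y_2)]$ where $(Y_1,Y_2)$ is ``a fresh execution of $f$.'' This is not correct. After a collision, $f^*$ plays $f^*(*2_+0^k)=f(0^k)=f^*(0^k)$; that is, it restarts \emph{as $f^*$}, not as $f$. A fresh run of $f^*$ and a fresh run of $f$ coincide only until the \emph{next} collision, after which $f^*$ restarts again while $f$ may do something else entirely. Consequently the correct first line of your final assembly is self-referential,
\[
\mathbb{E}T(f^*)\;=\;\mathbb{P}(\tau=\infty)\,\mathbb{E}[T\mid\tau=\infty]\;+\;\sum_t \mathbb{P}(\tau=t)\bigl[(t+1)+\mathbb{E}T(f^*)\bigr],
\]
not the version with $\mathbb{E}T(f)$ that you wrote. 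Since a priori $\mathbb{E}T(f)\le\mathbb{E}T(f^*)$ (by optimality of $f$), replacing $\mathbb{E}T(f^*)$ by $\mathbb{E}T(f)$ here moves the inequality in the wrong direction, so your displayed chain does not establish $\mathbb{E}T(f^*)\le\mathbb{E}T(f)$. (Your parenthetical ``i.e.\ $g=f^*$'' conflates two different policies: the policy that restarts a \emph{fresh copy of $f$} after the first collision is not $f^*$.)

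The fix is exactly what the paper does. Write $q_t=\mathbb{P}(\tau=t)$, $q_{-1}=\mathbb{P}(\tau=\infty)$, $M=\mathbb{E}[T\mid\tau=\infty]$. First argue $\sum_{t\ge 0} q_t<1$ (otherwise the $f$-side identity forces $\sum_t(t+1)q_t\le 0$, impossible). Then solve the self-referential equation to get
\[
\mathbb{E}T(f^*)=\frac{Mq_{-1}+\sum_{t\ge 0}(t+1)q_t}{1-\sum_{t\ge 0} q_t}.
\]
On the $f$ side, your own (correct) inequality $\mathbb{E}[T(X_1',X_2')\mid\tau=t]\ge\mathbb{E}T(f)$ gives
\[
\mathbb{E}T(f)\;\ge\;Mq_{-1}+\sum_{t\ge 0}(t+1)q_t+\Bigl(\sum_{t\ge 0} q_t\Bigr)\mathbb{E}T(f),
\]
and rearranging yields $\mathbb{E}T(f)\ge\mathbb{E}T(f^*)$, which together with optimality of $f$ finishes the proof.
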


Theorem \ref{thm:recurrent} tells us that for the objectives that are scalar additive
(including {\sc avg}, {\sc min}, and {\sc max}), 
we can restrict our attention to policies $f\in \mathcal{F}$
defined by a vector of probabilities $(p_i)_{i\ge 0}$,
such that $f(w0^k) = p_k$, where $w$ is empty or ends with 
$2_+$, i.e., the transmission probability cannot depend on anything
that happened \emph{before} the last collision.

\subsection{{\sc Avg}: Minimizing the Average Transmission Time}

Let $(p_k)_{k\ge 0}$ be the probability sequence 
corresponding to an optimal policy $f$ for {\sc avg}.
We first express our 
objective $\mathbb{E}X_1$ in terms of the sequence $(p_k)$.
Then, using the optimality of $f$, we deduce that $(p_k)_{k\ge 0}$ must
take on the special form described in Theorem~\ref{thm:min_average}.
This Theorem does not completely specify what the optimal protocol looks like.
Further calculations (Corollary~\ref{cor:average}) show that choosing 
$N=2$ is the best choice, and that {\bf {\sc avg}-Contention Resolution} 
(see Section~\ref{sect:introduction}) is an optimal protocol.

\begin{theorem}\label{thm:min_average}
There exists an integer $N>0$ 
and $a_0,a_1,a_2\in\mathbb{R}$ 
where $a_0-a_1+a_2=1$ and $a_0+a_1N+a_2N^2=0$ 
such that the following probability sequence
\begin{align*}
p_k=1-\frac{a_0+a_1k+a_2k^2}{a_0+a_1(k-1)+a_2(k-1)^2},\quad 0\leq k\leq N,
\end{align*}
induces an optimal policy that minimizes $\mathbb{E}X_1$. 
\end{theorem}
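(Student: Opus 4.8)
By Theorem~\ref{thm:recurrent} we may assume the optimal policy $f$ for {\sc avg} is given by a probability sequence $(p_k)_{k\ge 0}$, where after each collision the protocol restarts and $p_k$ is the transmission probability when $k$ idle slots have elapsed since the last restart (or since time zero). The plan is to (i) write $\mathbb{E}X_1$ as an explicit functional of $(p_k)$, (ii) argue the sequence must be \emph{finitely supported} in the sense that $p_N = 1$ for some $N$ (a device cannot idle forever with positive probability and still be optimal, since eventually transmitting with probability $1$ strictly helps), so the state space is effectively $\{0,1,\dots,N\}$, and (iii) use the first-order optimality conditions in the interior coordinates $p_0,\dots,p_{N-1}$ to force the stated closed form.

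For step (i): run the two-device chain from a fresh restart. In state $k$ (both devices have seen $0^k$ since the last restart), each device transmits independently with probability $p_k$. With probability $p_k^2$ there is a collision and we restart; with probability $2p_k(1-p_k)$ one device succeeds (at relative time $k+1$) and the other continues alone, after which the surviving device simply transmits at its next opportunity (transmitting with probability $1$ is optimal once alone, by scalar additivity / Theorem~\ref{thm:recurrent}-type reasoning, or one can keep it symbolic); with probability $(1-p_k)^2$ both idle and we move to state $k+1$. Introduce the ``survival'' products $q_k = \prod_{j<k}(1-p_j)^2$ = probability of reaching state $k$ without a collision or success, and let $A = \sum_k$ (probability of a collision at state $k$) $= 1 - \sum_k q_k\,2p_k(1-p_k) - (\text{success-from-both terms})$. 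Conditioning on the first restart, $\mathbb{E}X_1 = (\text{expected contribution before first restart}) + A\cdot(\text{restart time}) + A\cdot \mathbb{E}X_1$, which solves to a ratio $\mathbb{E}X_1 = \Phi((p_k))/(1-A)$ with $\Phi$ and $A$ both \emph{multilinear-in-the-}$q_k$ expressions. After the substitution $b_k := q_k$ (equivalently, tracking the cumulative idle-survival probabilities), the objective becomes a \emph{rational} function whose numerator and denominator are \emph{linear} in each $b_k$. I expect this is the heart of the calculation and the main obstacle: getting the bookkeeping right so that the objective is manifestly a ratio of affine functions of a well-chosen reparametrization.

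For step (iii): once $\mathbb{E}X_1 = P(b)/Q(b)$ with $P,Q$ affine in each $b_k$ (for $1\le k\le N-1$, say, with $b_0$ fixed by normalization and $b_N$ forced by $p_N=1$), optimality gives $\partial_{b_k}\big(P/Q\big)=0$, i.e. $(\partial_{b_k}P)\,Q - P\,(\partial_{b_k}Q) = 0$ at the optimum. Since $P,Q$ are affine in $b_k$, these are the conditions $(\partial_{b_k}P)\,Q = P\,(\partial_{b_k}Q)$ where the partials are \emph{constants} in $b_k$; writing $P = \alpha_k b_k + \beta_k$, $Q = \gamma_k b_k + \delta_k$ with $\alpha_k,\beta_k,\gamma_k,\delta_k$ affine in the remaining variables, stationarity becomes $\alpha_k(\gamma_k b_k+\delta_k) = \gamma_k(\alpha_k b_k + \beta_k)$, i.e. $\alpha_k\delta_k = \gamma_k\beta_k$ — a relation \emph{not} involving $b_k$ itself. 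Crucially, $P/Q$ equals the common optimal value $V^*$, so $P = V^* Q$ identically along the optimal choice, and the stationarity relations say that $b_k \mapsto P(b)-V^*Q(b)$ has a zero \emph{coefficient of} $b_k$, i.e. $P - V^*Q$ is a polynomial independent of each interior $b_k$. Peeling this off coordinate by coordinate, one finds that the $b_k$ (hence the $q_k = \prod_{j<k}(1-p_j)^2$) satisfy a \emph{linear second-order recurrence} with characteristic behavior fixed by the two boundary conditions ($b_0$ normalized, $b_N = 0$ from $p_N=1$). A second-order linear recurrence with constant coefficients whose general term can be taken quadratic in $k$ (double characteristic root, which is what the symmetry of the two-device collision term $p_k^2$ produces) gives exactly $q_k$ proportional to $(a_0 + a_1 k + a_2 k^2)^2$ — no, more precisely one sets $\sqrt{q_k}\big/\sqrt{q_{k-1}} = 1-p_k$ and the recurrence on $\sqrt{q_k}$, call it $r_k$, is the one with quadratic solution $r_k = a_0 + a_1 k + a_2 k^2$, so $1 - p_k = r_k/r_{k-1}$, which is precisely the claimed form $p_k = 1 - \dfrac{a_0 + a_1 k + a_2 k^2}{a_0 + a_1(k-1)+a_2(k-1)^2}$. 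The two side conditions $a_0 - a_1 + a_2 = 1$ and $a_0 + a_1 N + a_2 N^2 = 0$ are just $r_{-1}=1$ (normalizing so that $q_0 = 1$, i.e. state $0$ is reached with probability $1$) and $r_N = 0$ (forcing $p_N = 1$).

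Thus the proof reduces to: set up the renewal equation to express $\mathbb{E}X_1$ as a ratio of functions affine in each $q_k$; invoke Theorem~\ref{thm:existence} for existence of the optimum and Theorem~\ref{thm:recurrent} for the restart-on-collision normalization; argue $p_N=1$ for some finite $N$ by an exchange argument (truncating an overly long idle tail can only help); then extract the coordinatewise stationarity conditions, recognize them as a constant-coefficient linear recurrence on $r_k=\sqrt{q_k}$ with a quadratic general solution, and read off the boundary conditions. The main obstacle, as noted, is the clean derivation of the rational-function form of the objective with the right reparametrization; once that is in hand, the recurrence and the closed form follow by elementary linear algebra. The sharper question of \emph{which} $N$ and which $(a_0,a_1,a_2)$ are optimal is deferred to Corollary~\ref{cor:average}.
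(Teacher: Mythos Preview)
Your plan is in the same spirit as the paper's proof, but two of the three steps contain concrete errors that would derail the argument as written.

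\textbf{Step (i): the renewal equation.} Your joint-process conditioning forces you to track what ``the surviving device'' does after one device succeeds, and you assert it ``simply transmits at its next opportunity (transmitting with probability~1 is optimal once alone)''. In the acknowledgement-based model this is false: a device that idled while the other succeeded has history $0^{k+1}$ and \emph{cannot distinguish} this from the situation where both idled; it must continue to use $p_{k+1},p_{k+2},\dots$. The paper avoids this entirely by conditioning only on $S_1$, device~1's first transmission time. Then $X_1=S_1+1$ unless $S_2=S_1$ (collision, restart), and since $S_1,S_2$ are i.i.d.\ one gets immediately
\[
\mathbb{E}X_1=\sum_{k\ge 0}\mathbb{P}(S_1=k)\bigl(k+1+\mathbb{P}(S_2=k)\,\mathbb{E}X_1\bigr),
\]
with no post-success bookkeeping at all.

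\textbf{The reparametrization and the ``affine/affine'' claim.} The objective is \emph{not} a ratio of functions affine in your $q_k=\prod_{j<k}(1-p_j)^2$; already the collision term $q_k p_k^2=q_k-2\sqrt{q_k q_{k+1}}+q_{k+1}$ carries a square root. The correct variable is your $r_k=\sqrt{q_k}$, which is the paper's $m_{k-1}=\prod_{i\le k-1}(1-p_i)$. In these variables the paper gets the closed form
\[
\mathbb{E}X_1=\frac{\sum_{k\ge 0} m_{k-1}}{1-\sum_{k\ge 0}(m_{k-1}-m_k)^2},
\]
numerator \emph{linear} and denominator \emph{quadratic} in each $m_k$. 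Setting $\partial\mathbb{E}X_1/\partial m_{k_0}=0$ directly gives the nonhomogeneous second-order recurrence $m_{k_0-1}-2m_{k_0}+m_{k_0+1}=1/(2\mathbb{E}X_1)$, whose general solution is a quadratic in $k$; no affinity structure is used, and the ``double root'' is just the double root $x=1$ of the homogeneous part.

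\textbf{Step (ii): finiteness of $N$.} Your exchange argument (``truncating an overly long idle tail can only help'') is not justified: setting $m_N=0$ decreases the numerator but may also decrease the denominator. In the paper, finiteness is deduced \emph{after} the recurrence: the leading coefficient is $a_2=1/(4\mathbb{E}X_1)>0$, so if $m_k>0$ for all $k$ the quadratic would tend to $+\infty$, contradicting $m_k\in[0,1]$. Thus some $m_N=0$, giving $p_N=1$ and the boundary condition $a_0+a_1N+a_2N^2=0$; the other condition $a_0-a_1+a_2=1$ is $m_{-1}=1$.

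So your endpoint (quadratic $r_k$, two boundary conditions) is right, but the route needs the paper's conditioning on $S_1$, the $m_k$-parametrization, and the \emph{a posteriori} finiteness argument to actually close.
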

\begin{remark}
Note that defining $p_0,\ldots,p_N$ is sufficient,
since $p_N=1$ induces a certain collision if there are still 2 devices in
the system, 
which causes the algorithm to reset.  In the next time slot both devices 
would transmit with probability $p_0$.
\end{remark}
\begin{proof}
Assume we are using an optimal policy $f^*$ induced by 
a probability sequence $(p_i)_{i=1}^\infty$.
Define $S_1,S_2\ge 0$ to be the random variables indicating the 
\emph{index} of the first slot in which devices 1 and 2 first transmit.
Observe that $S_1$ and $S_2$ are i.i.d.~random variables,
where $\mathbb{P}(S_1=k)=\mathbb{P}(S_2=k)=p_k\prod_{i=0}^{k-1}(1-p_i)$.\footnote{We use the convention that $\prod_{i=0}^{-1}a_k=1$, 
where $(a_k)_{k=0}^\infty$ is any sequence.} 
We have
\begin{align}
\mathbb{E}X_1&=\sum_{k=0}^\infty\left[ \mathbb{P}(S_1=k)(k+1+\mathbb{P}(S_2= k)\cdot \mathbb{E}X_1)\right]\nonumber\\
\iff 
\mathbb{E}X_1&=\sum_{k=0}^\infty\left[ p_k \left(\prod_{i=0}^{k-1}(1-p_i)\right) \cdot \left(k+1+p_k\left(\prod_{i=0}^{k-1}(1-p_i)\right)\cdot \mathbb{E}X_1\right)\right].\label{eq:average_pk} 
\end{align}
Define $m_k=\prod_{i=0}^k (1-p_i)$ to be the probability that a device
idles in time steps $0$ through $k$, where $m_{-1} = 1$.
Note that $p_k m_{k-1}=m_{k-1}-m_{k}$ is true for all $k\geq 0$.
We can rewrite Eqn.~(\ref{eq:average_pk}) as:
\begin{align}
\mathbb{E}X_1&=\sum_{k=0}^\infty\left[ (m_{k-1}-m_k)(k+1+(m_{k-1}-m_k)\cdot \mathbb{E}X_1)\right]\nonumber\\
\iff \mathbb{E}X_1&=\mathbb{E}X_1\cdot \sum_{k=0}^\infty (m_{k-1}-m_k)^2+\sum_{k=0}^\infty (m_{k-1}-m_k)(k+1)\nonumber\\
\iff \mathbb{E}X_1&=\frac{\sum_{k=0}^\infty m_{k-1}}{1-\sum_{k=0}^\infty (m_{k-1}-m_k)^2}.\label{eq:average_mk}
\end{align}
By definition, $(m_k)_{k=-1}^\infty$ is a non-increasing 
sequence with $m_{-1}=1$ and $m_k\geq 0$. 
There is no optimal policy with 
$m_{k-1}=m_k\neq 0$ (meaning $p_k=0$), since otherwise we can delete 
$m_k$ from the sequence, 
leaving the denominator unchanged but reducing the numerator.
This implies $(m_k)_{k=-1}^\infty$ is either an infinite, 
positive, strictly decreasing sequence or a finite, positive,
strictly-decreasing sequence followed by a tail of zeros. 
Pick any index $k_0\ge 0$ such that 
$m_{k_0}>0$.  
We know $m_{k_0-1}>m_{k_0}>m_{k_0+1}$.  By the optimality of $f^*$, 
$m_{k_0}$ must, 
holding all other parameters fixed, be the optimal choice for 
this parameter in its neighborhood. In other words,
\begin{align*}
&\frac{\partial \mathbb{E}X_1}{\partial m_{k_0}}=0\\ \iff&\frac{1-\sum_{k=0}^\infty (m_{k-1}-m_k)^2+\sum_{k=0}^\infty m_{k-1}(-2(m_{k_0-1}-m_{k_0})+2(m_{k_0}-m_{k_0+1}))}{(1-\sum_{k=0}^\infty (m_{k-1}-m_k)^2)^2} = 0
\end{align*}
Therefore we have for any $k_0\geq 0$ such that $m_{k_0} > 0$,
\begin{align}
&2m_{k_0}-m_{k_0+1}-m_{k_0-1} = C \nonumber\\
\iff & m_{k_0}-m_{k_0+1}=m_{k_0-1}-m_{k_0} + C \label{eq:average_m_recur}
\end{align}
where $C=\frac{\sum_{k=0}^\infty (m_{k-1}-m_k)^2-1}{2\sum_{k=0}^\infty m_{k-1}}$ is a real constant. Note that $C=-\frac{1}{2\mathbb{E}X_1}<0$.
Fix any $k_1\ge 0$ such that $m_{k_1}>0$.  
By summing up Eqn.~(\ref{eq:average_m_recur}) 
for $k_0=0,1,\ldots,k_1$ and rearranging terms, we have
\begin{align}
m_{k_1}-m_{k_1+1}=(k_1+1) C +m_{-1}-m_{0}. \label{eq:average_m_recur1}
\end{align}
Fix any $k_2\ge 0$ such that $m_{k_2}>0$. 
By summing up Eqn.~(\ref{eq:average_m_recur1}) for $k_1=0,1,\ldots,k_2$, we have
\begin{align}
m_{k_2+1}&=(m_0-m_{-1})(k_2+2)+m_{-1}-\frac{(k_2+1)(k_2+2)}{2}C\label{eq:average_formula}\\
&=-\frac{C}{2}k_2^2 + \left(-\frac{3C}{2}+m_0-m_{-1}\right)k_2 + 2m_0-m_{-1}-C.
\end{align}
Recall that $C<0$ and $m_k\in[0,1]$. This rules out the possibility that the sequence $(m_{k})_{k=0}^\infty$ is an infinite strictly decreasing sequence, since a non-degenerate quadratic function is unbounded as $k$ goes to infinity. As a result, there must be a positive integer $N\ge 1$ for which 
$m_{N-1}>0$ and $m_N=0$. 
Also note that Eqn.~(\ref{eq:average_formula}) is not only true for $k_2=0,1,\ldots N-1$, but also true for $k_2=-1$ and $-2$. (This can be checked by directly setting $k_2=-1$ and $-2$.) 
We conclude that it is possible
to write $(m_k)$ as
\begin{align*}
m_k&={a_0+a_1k+a_2k^2},\quad -1\leq k\leq N,
\intertext{for some constants $a_0,a_1,a_2$ satisfying}
m_{-1} &= a_0 - a_1 + a_2 = 1\\
m_N &= a_0 + a_1N + a_2N^2 = 0.
\end{align*}
Writing $p_{k}=1-\frac{m_{k}}{m_{k-1}}$ gives the statement of the theorem.
\end{proof}

Based on Theorem \ref{thm:min_average}, we can find the optimal probability sequence for each fixed $N$ by choosing the best $a_2$. 
It turns out that $N=2$ is the best choice, 
though $N=3$ is only marginally worse.
The proof of Corollary \ref{cor:average} is in Appendix \ref{appendix:proof_average}.

\begin{corollary}\label{cor:average}
{\bf {\sc avg}-Contention Resolution} is an optimal protocol
for $n=2$ devices under the {\sc avg} objective.  The expected
average time is $\sqrt{3/2}+3/2 \approx 2.72474$.
\end{corollary}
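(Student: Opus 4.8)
The plan is to push the characterization of Theorem~\ref{thm:min_average} all the way to a single real optimization followed by a finite search over $N$. By Theorem~\ref{thm:min_average} an optimal {\sc avg}-policy is pinned down by an integer $N\ge 1$ together with the quadratic $m_k=a_0+a_1k+a_2k^2$ on $-1\le k\le N$, subject to $m_{-1}=a_0-a_1+a_2=1$ and $m_N=a_0+a_1N+a_2N^2=0$. These two linear equations determine $a_0=\frac{N}{N+1}-a_2N$ and $a_1=-\frac{1}{N+1}-a_2(N-1)$, so for each $N$ the admissible policies form a one-parameter family; it is convenient to write $m_k=(N-k)\left(\frac{1}{N+1}-a_2(k+1)\right)$, with $a_2$ the only free parameter. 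I would then substitute this into the closed form~(\ref{eq:average_mk}), using that $m_k=0$ for $k\ge N$ so both series truncate: the numerator is $\sum_{j=-1}^{N-1}m_j$ and the denominator is $1-\sum_{k=0}^{N}(m_{k-1}-m_k)^2=1-\sum_{k=0}^{N}(a_1+a_2(2k-1))^2$. Summing the arithmetic and quadratic progressions gives $\mathbb{E}X_1=\frac{L_N(a_2)}{1-Q_N(a_2)}$ with $L_N$ affine and $Q_N$ quadratic in $a_2$, their coefficients explicit polynomials in $N$.

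For each fixed $N$ I would minimize over $a_2$. Since $L_N$ has degree $1$ and $Q_N$ degree $2$, the stationarity condition $\partial\mathbb{E}X_1/\partial a_2=0$ reduces to a quadratic equation in $a_2$; only one of its two roots yields a genuinely decreasing, nonnegative sequence $(m_k)$ (the other forces $m_0<0$), so I would take that root $a_2^\ast(N)$ and plug it back to obtain a closed-form minimum $v(N):=\min_{a_2}\mathbb{E}X_1$. For $N=2$ this yields $a_2^\ast=\tfrac{1-\sqrt{2/3}}{2}$, hence $m_0=\tfrac{\sqrt6-1}{3}$ and $m_1=\tfrac{\sqrt6-2}{3}$; converting via $p_k=1-m_k/m_{k-1}$ gives exactly $p_0=\tfrac{4-\sqrt6}{3}$, $p_1=\tfrac{1+\sqrt6}{5}$, $p_2=1$, i.e.\ {\bf {\sc avg}-Contention Resolution}, with cost $v(2)=\tfrac{3+\sqrt6}{2}=\tfrac32+\sqrt{3/2}\approx 2.72474$. (For $N=1$ one recovers the elementary value $v(1)=\tfrac32+\sqrt2\approx 2.914$.)

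The remaining task is to show $N=2$ beats every other positive integer $N$. For this I would first establish a crude but uniform lower bound: since $m_{N-1}\ge 0$ forces $a_2\le\frac{1}{N(N+1)}$, a direct summation gives the numerator $\sum_{j=-1}^{N-1}m_j=\frac{N+2}{2}-a_2\cdot\frac{N(N+1)(N+2)}{6}\ge\frac{N+2}{3}$, while the denominator $1-Q_N\le 1$; hence $\mathbb{E}X_1\ge\frac{N+2}{3}$ for every admissible policy with parameter $N$, which already exceeds $\tfrac32+\sqrt{3/2}$ once $N$ is past a small threshold. This leaves only a handful of values of $N$ to check against $v(2)$ using the closed form $v(N)$ from the previous step; in particular $v(3)\approx 2.7274$ is only marginally worse, and $v(N)$ is larger still for the other small $N$. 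Combining these facts identifies $N=2$ with the optimal parameters above, which proves the corollary.

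I expect the global step --- obtaining and analyzing $v(N)$ cleanly enough to pin the minimizer at $N=2$ rather than a neighbouring integer --- to be the main obstacle; the one-variable optimization for each $N$ and the final conversion from $(m_k)$ back to $(p_k)$ are routine algebra.
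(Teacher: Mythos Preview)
Your plan mirrors the paper's proof: parameterize the admissible policies for each $N$ by a single real $a_2$, write $\mathbb{E}X_1$ as a rational function of $a_2$ via Eqn.~(\ref{eq:average_mk}), optimize in $a_2$, then eliminate large $N$ and check small $N$ by hand. Your cutoff device $\mathbb{E}X_1\ge(N+2)/3$ is actually a bit cleaner than the paper's treatment of large $N$ (which argues that the stationary $a_2$ is complex for $N\ge5$ and then evaluates at the boundary and checks monotonicity); it just leaves a couple more values of $N$ to tabulate.

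There is, however, one genuine slip that would derail the finite check. You assert that for each $N$ exactly one root of the stationarity quadratic yields a valid decreasing nonnegative sequence $(m_k)$. This holds for $N=1,2$, but for every $N\ge3$ \emph{neither} root lies in the admissible window $a_2\in\bigl[-\tfrac{1}{N(N+1)},\tfrac{1}{N(N+1)}\bigr]$ (and the constraint that bites first is $m_{N-1}\ge0$, not $m_0\ge0$); the constrained optimum then sits at the right endpoint $a_2=\tfrac{1}{N(N+1)}$. This matters because the gap is razor-thin at $N=3$: the interior stationary value $a_2=(5-\sqrt{10})/20\approx0.0919$ is just outside $[-1/12,1/12]$, and plugging it in gives $\mathbb{E}X_1=\tfrac{5+\sqrt{10}}{3}\approx2.7208$, which is \emph{below} $v(2)=\tfrac{3+\sqrt6}{2}\approx2.7247$. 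So if you do not enforce feasibility you will be unable to rule out $N=3$. The paper carries the constraint~(\ref{eq:a2_constraint}) explicitly and records that for $N=3,4$ the optimum is attained at the endpoint (Table~\ref{tab:average_solution}); once you do the same, your argument goes through.
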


\subsection{{\sc Min}: Minimizing the Earliest Transmission Time}

It is straightforward to show $\mathbb{E}\min(X_1,X_2) = 2$ under the optimal policy.  Nonetheless, it is useful to have a general closed form
expression for $\mathbb{E}\min(X_1,X_2)$ in terms of the $(m_k)$ sequence
of an arbitrary (suboptimal) policy, as shown in the proof of Theorem~\ref{thm:min_short}.  
This will come in handy later 
since $\mathbb{E}\max(X_1,X_2)$ can be expressed as $2\mathbb{E}X_1 - \mathbb{E}\min(X_1,X_2)$.

\begin{theorem}\label{thm:min_short}
The policy that minimizes $\mathbb{E}\min(X_1,X_2)$, 
{\bf {\sc min}-Contention Resolution}, transmits with constant
probability $1/2$ until successful.
Using the optimal policy, $\mathbb{E}\min(X_1,X_2) = 2$.
\end{theorem}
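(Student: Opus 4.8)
The plan is to mimic the structure of the proof of Theorem~\ref{thm:min_average}: first derive a closed-form expression for $\mathbb{E}\min(X_1,X_2)$ in terms of the idling-probability sequence $(m_k)_{k\ge -1}$, then minimize it. By Theorem~\ref{thm:recurrent} (using that {\sc min} is scalar additive), it suffices to consider policies given by a probability vector $(p_k)_{k\ge 0}$ that resets after every collision, so $(m_k)$ is non-increasing with $m_{-1}=1$, $m_k\ge 0$, and $p_k m_{k-1}=m_{k-1}-m_k$. Let $Y=\min(X_1,X_2)$. Conditioning on the first slot in which \emph{some} device transmits: with $S_1,S_2$ i.i.d.\ as in the previous proof, the event that the earliest transmission happens in slot $k$ and is a \emph{success} contributes $k+1$, while a collision in slot $k$ restarts the protocol, contributing $k+1+\mathbb{E}Y$. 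I would write this as a recurrence analogous to Eqn.~(\ref{eq:average_pk}): the probability that neither device transmits before slot $k$ is $m_{k-1}^2$, the probability that exactly one transmits in slot $k$ (a success) is $2(m_{k-1}-m_k)m_k$, and the probability both transmit in slot $k$ (a collision) is $(m_{k-1}-m_k)^2$. This gives
\begin{align*}
\mathbb{E}Y=\sum_{k=0}^\infty m_{k-1}^2
=\frac{\sum_{k=0}^\infty m_{k-1}^2}{1}\cdot\frac{1}{\,\bigl(\text{no collision-restart correction needed}\bigr)\,},
\end{align*}
and more carefully, collecting the collision term $\sum_k (m_{k-1}-m_k)^2\,\mathbb{E}Y$ on the left, one obtains a clean formula of the shape $\mathbb{E}Y=\bigl(\sum_k m_{k-1}^2\bigr)/\bigl(1-\sum_k(m_{k-1}-m_k)^2\bigr)$ or, after simplification using $\sum_k m_{k-1}^2-\sum_k(m_{k-1}-m_k)^2-\sum_k m_k^2$-type telescoping, an even simpler expression. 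Getting this algebra exactly right, and recognizing the telescoping identity $\sum_{k\ge 0}\bigl(m_{k-1}^2-m_k^2\bigr)=m_{-1}^2=1$, is the step I would be most careful about.

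Once the closed form is in hand, I would show it is minimized — with value exactly $2$ — by the constant policy $p_k=1/2$, i.e.\ $m_k=2^{-(k+1)}$. There are two natural routes. The direct route: plug $m_k=2^{-(k+1)}$ into the formula and verify $\mathbb{E}Y=2$ by summing geometric series; then argue optimality by a perturbation/convexity argument as in Theorem~\ref{thm:min_average} — set $\partial \mathbb{E}Y/\partial m_{k_0}=0$ for every $k_0$ with $m_{k_0}>0$, derive a recurrence forcing $(m_k)$ to be geometric, and solve it with the boundary condition $m_{-1}=1$ to pin down the ratio $1/2$. The slicker route, which I would prefer if the algebra cooperates: observe that the numerator and a rescaled denominator can be compared term-by-term, or note that $\min(X_1,X_2)\ge 2$ deterministically unless a success occurs in slot $0$ or $1$, and bound the probability of an early success — in any single round before a restart, $\mathbb{P}(\text{success in slot }0)=2p_0(1-p_0)\le 1/2$, with equality iff $p_0=1/2$, which already caps how much mass can sit at value $1$.

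The main obstacle I anticipate is \emph{not} the optimization but making the ``restart'' bookkeeping airtight: because a collision in slot $k$ costs $k+1$ \emph{and then} an independent fresh copy of the process, one must be sure the contributions $(k+1)$ and $\mathbb{E}Y$ are combined correctly and that the geometric-series manipulations converge (they do, since any optimal $(m_k)$ has $m_k\to 0$, and in fact one can restrict to finitely-supported $(m_k)$ up front, just as in the {\sc avg} case, to avoid convergence worries entirely). A secondary subtlety is confirming that the first-order condition indeed yields a \emph{minimum} and not a saddle — here one leans on the same global existence guarantee (Theorem~\ref{thm:existence}) plus the fact that the stationary sequence is unique, so it must be the optimum. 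Finally, I would remark that unlike {\sc avg} and {\sc max}, the optimal {\sc min} policy has \emph{infinite} support (it never resets on its own, only on collisions), which is why its description is so simple.
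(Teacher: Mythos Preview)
Your proposal is correct and follows essentially the paper's approach: the paper derives the same closed form $\mathbb{E}\min(X_1,X_2)=\bigl(\sum_k m_{k-1}^2\bigr)\big/\bigl(1-\sum_k(m_{k-1}-m_k)^2\bigr)$ and then takes your ``slicker route,'' rewriting the denominator via the telescoping identity as $2\sum_k(m_{k-1}-m_k)m_k$ and bounding each summand by $(m_{k-1}/2)^2$, which yields the ratio $\ge 2$ with equality iff $m_k=m_{k-1}/2$ for all $k$. The derivative route you sketch would also work but is less direct here; the paper's one-line inequality is exactly the ``term-by-term comparison'' you anticipated.
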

\begin{proof}
By Theorem~\ref{thm:recurrent} we can consider an optimal policy defined
by a sequence of transmission probabilities $(p_k)_{k\ge 0}$.
Let $H_{j,k}$ be the transmission/idle history of player $j\in \{1,2\}$ 
up to time slot $k$. Then we have
\begin{align}
\mathbb{E}\min(X_1,X_2) &=\sum_{k=0}^\infty \mathbb{P}(H_{1,k}=H_{2,k}=0^k1)(k+1+\mathbb{E}\min(X_1,X_2))\nonumber\\
&\;\;\;\;\;\;\;\; +\sum_{k=0}^\infty \mathbb{P}(\{H_{1,k},H_{2,k}\}=\{0^k1,0^k0\})(k+1)\nonumber\\
&=\sum_{k=0}^\infty\left( \prod_{i=0}^{k-1}(1-p_i)^2\cdot p_k^2(k+1+ \mathbb{E}\min(X_1,X_2))\right)\nonumber\\
&\;\;\;\;\;\;\;\; +\sum_{k=0}^\infty \left(\prod_{i=0}^{k-1}(1-p_i)^2\cdot 2p_k(1-p_k)(k+1)\right)\nonumber\\
&= \sum_{k=0}^\infty\left( \prod_{i=0}^{k-1}(1-p_i)^2\cdot \left((1-(1-p_k)^2)(k+1)+p_k^2\cdot \mathbb{E}\min(X_1,X_2)\right)\right).\nonumber
\intertext{Defining $m_k=\prod_{i=0}^k (1-p_i)$ as before, we have}
&= \sum_{k=0}^\infty \left((m_{k-1}^2 - m_k^2)(k+1) + m_{k-1}^2p_k^2\cdot \mathbb{E}\min(X_1,X_2)\right)\nonumber
\intertext{As $m_{k-1}p_k = m_{k-1}-m_k$, we can write $\mathbb{E}\min(X_1,X_2)$ in closed form as}
\mathbb{E}\min(X_1,X_2) &= \frac{\sum_{k=0}^\infty (m_{k-1}^2-m_k^2)(k+1)}{1-\sum_{k=0}^\infty(m_{k-1}-m_k)^2}\nonumber\\
&= \frac{\sum_{k=0}^\infty m_{k-1}^2}{1-\sum_{k=0}^\infty(m_{k-1}-m_k)^2}\label{eq:min_mk}\\
&=\frac{\sum_{k=0}^\infty m_{k-1}^2}{2\sum_{k=0}^\infty (m_{k-1}-m_k)m_k}\nonumber\\
&\geq \frac{\sum_{k=0}^\infty m_{k-1}^2}{2\sum_{k=0}^\infty \left(\frac{m_{k-1}}{2}\right)^2}=2
                    \;\;\;\;\;\; \mbox{($(m_{k-1}-m_k)m_k$ maximized when $m_k=m_{k-1}/2$.)}\nonumber
\end{align}
$\mathbb{E}\min(X_1,X_2)$ attains minimum $2$ if and only if for all $k\in\mathbb{N}$, $m_{k-1}-m_k=m_k$, i.e. $m_k=\frac{m_{k-1}}{2}$ and $m_{0}=\frac{1}{2}$. Thus $p_k=1-\frac{m_k}{m_{k-1}}=\frac{1}{2}$ for all $k$. This constant probability sequence corresponds to the constant policy with sending probability $\frac{1}{2}$ (i.e., {\bf {\sc min}-Contention Resolution}).
\end{proof}

\ignore{
Before going to find the optimal algorithm that minimizes the longer waiting time, we note done the following useful lemma.
\begin{lemma}\label{lem:short_property}
Let $X_1,X_2$ be the waiting times induced by the policy with constant sending probability $\frac{1}{2}$. Then $\mathbb{E}\min(X_1,X_2)=2$, $\mathbb{E}X_1=3$ and $\mathbb{E}\max(X_1,X_2)=4$.
\end{lemma}
\begin{proof}
By Theorem \ref{thm:min_short}, we know $\mathbb{E}\min(X_1,X_2)=2$. Note that since $\mathbb{E}\max(X_1,X_2)= \mathbb{E}(X_1+X_2)-\mathbb{E}\min(X_1,X_2)=2\mathbb{E}X_1-\mathbb{E}\min(X_1,X_2)$, we only need to calculate $\mathbb{E}X_1$.

We set $p_i=\frac{1}{2}$ in equation (\ref{eq:average_pk}) for all $i\in\mathbb{N}$, then we have 
\begin{align*}
    &\mathbb{E}X_1=\sum_{k=0}^\infty 2^{-k-1}(k+1+2^{-k-1}\mathbb{E}X_1)\\
    \iff & \mathbb{E}X_1=\frac{\sum_{k=0}^\infty 2^{-k-1}(k+1)}{1-\sum_{k=0}^\infty 4^{-k-1}}=3.
\end{align*}

Finally $\mathbb{E}\max(X_1,X_2)=2\mathbb{E}X_1-\mathbb{E}\min(X_1,X_2)=4$.
\end{proof}
}

\subsection{{\sc Max}: Minimizing the Last Transmission Time}

Before determining the optimal policy under the {\sc max} objective,
it is useful to have a crude estimate for its cost.
\begin{lemma}\label{lem:long_bound}
Let $f$ be the optimal policy for the {\sc max} objective and
$X_1^f,X_2^f$ be the latencies of the two devices.
Then $\mathbb{E}\max(X_1^f,X_2^f) \in [3,4]$.
\end{lemma}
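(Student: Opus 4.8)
The plan is to sandwich the optimal {\sc max} cost between two explicit protocols. For the upper bound $\mathbb{E}\max(X_1^f,X_2^f)\le 4$, I would simply exhibit one feasible policy whose {\sc max} cost is exactly $4$, namely {\bf {\sc min}-Contention Resolution} (transmit with probability $1/2$ every step). Since $f$ is optimal, its cost is no larger. To compute the cost of this specific policy I would use the identity $\mathbb{E}\max(X_1,X_2)=2\mathbb{E}X_1-\mathbb{E}\min(X_1,X_2)$ noted in the text just before Theorem~\ref{thm:min_short}: we already know $\mathbb{E}\min(X_1,X_2)=2$ from that theorem, and $\mathbb{E}X_1=3$ follows by plugging $p_i=1/2$ into Eqn.~(\ref{eq:average_pk}) (or Eqn.~(\ref{eq:average_mk}) with $m_k=2^{-k-1}$), giving a geometric-series evaluation $\mathbb{E}X_1 = \left(\sum_{k\ge0}2^{-k-1}(k+1)\right)/\left(1-\sum_{k\ge0}4^{-k-1}\right)=2/(2/3)=3$. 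Hence $\mathbb{E}\max=2\cdot 3-2=4$, so the optimum is $\le 4$.

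For the lower bound $\mathbb{E}\max(X_1^f,X_2^f)\ge 3$, the cleanest route is to observe that $\max(X_1,X_2)\ge \mathbf{?}$ pointwise is too weak, so instead I would lower-bound via $\mathbb{E}\max(X_1,X_2) = 2\mathbb{E}X_1 - \mathbb{E}\min(X_1,X_2)$ again: by Theorem~\ref{thm:min_short}, $\mathbb{E}\min(X_1,X_2)\ge 2$ is false as an inequality we can exploit directly (it goes the wrong way), so instead I would argue $\mathbb{E}X_1\ge \sqrt{3/2}+3/2$ from Corollary~\ref{cor:average} — but that also only controls one term. The more robust argument: for \emph{any} policy, $X_1,X_2\ge 1$ always, and on the event that the two devices do not both succeed in the very first slot we have $\max(X_1,X_2)\ge 2$; more carefully, $\max(X_1,X_2)=1$ only if exactly one device transmits in slot $0$ \emph{and} the other succeeds in slot $0$ — impossible, since two devices cannot both succeed in the same slot. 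Thus $\max(X_1,X_2)\ge 2$ always, and $\max(X_1,X_2)\ge 3$ unless at least one device succeeds by slot $1$ and the other by slot $2$; bounding the probability of early joint success and combining yields $\mathbb{E}\max\ge 3$. Concretely I would write $\mathbb{E}\max(X_1,X_2)\ge \mathbb{E}X_1 \ge 2$ trivially, and then sharpen: since the two latencies are distinct-slot events, $X_1+X_2\ge 1+2=3$ always (the two successes occur in different slots, the earlier being $\ge$ slot $0$, i.e.\ latency $\ge1$, the later being $\ge$ slot $1$, i.e.\ latency $\ge 2$), hence $\max(X_1,X_2)\ge (X_1+X_2)/2\ge 3/2$ — still not enough, so the genuine argument must use that $\max\ge$ the \emph{later} success time, which is at least the second-smallest element of a set of distinct positive integers drawn from $\{1,2,3,\dots\}$, giving $\max(X_1,X_2)\ge 2$ surely and allowing a first-step analysis.

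The main obstacle is making the lower bound $\ge 3$ tight enough without circularity: the trivial pointwise bound only gives $\max\ge 2$. I expect the intended argument is a short first-slot case analysis. Let $p=p_0$ be the slot-$0$ transmission probability of $f$. With probability $(1-p)^2$ neither device transmits in slot $0$, so $\max\ge 2$ and in fact the whole process restarts with one slot already elapsed, contributing $\ge (1-p)^2\cdot(1 + \mathbb{E}\max)$ worth of delay; with probability $2p(1-p)$ exactly one succeeds at slot $0$ (latency $1$) while the other continues alone and needs $\ge 1$ more slot, so $\max\ge 2$ here; with probability $p^2$ both collide, everything restarts, $\max\ge 1+\mathbb{E}\max$. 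Writing $M=\mathbb{E}\max(X_1^f,X_2^f)$ and assembling $M \ge p^2(1+M) + (1-p)^2(1+M) + 2p(1-p)\cdot 2 = (1-2p+2p^2)(1+M)+4p(1-p)$, then solving $M(2p-2p^2)\ge (1-2p+2p^2)+4p(1-p)=1+2p-2p^2$ gives $M\ge \frac{1+2p-2p^2}{2p-2p^2}=1+\frac{1}{2p(1-p)}\ge 1+\frac{1}{2\cdot(1/4)}=3$, using $p(1-p)\le 1/4$. This closes the lemma.
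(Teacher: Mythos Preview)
Your upper bound is exactly the paper's: exhibit the always-$1/2$ policy, compute $\mathbb{E}X_1=3$ and $\mathbb{E}\min=2$ for it, and conclude $\mathbb{E}\max=4$, so the optimum is $\le 4$.

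For the lower bound you take a genuinely different route. The paper's argument is a one-liner: since the two successes occur in distinct slots, $\max(X_1,X_2)\ge 1+\min(X_1,X_2)$ pointwise, hence $\mathbb{E}\max(X_1^f,X_2^f)\ge 1+\mathbb{E}\min(X_1^f,X_2^f)\ge 1+2=3$, the last step invoking Theorem~\ref{thm:min_short}, which says $2$ is the \emph{minimum} of $\mathbb{E}\min$ over all policies. You actually glanced at this route and discarded it, writing that ``$\mathbb{E}\min(X_1,X_2)\ge 2$ \ldots\ goes the wrong way'' --- but that is a misreading: Theorem~\ref{thm:min_short} asserts the optimal value of $\mathbb{E}\min$ is $2$, so every policy satisfies $\mathbb{E}\min\ge 2$, which is precisely the direction you need.

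Your replacement argument --- the first-slot recursion $M\ge (p^2+(1-p)^2)(1+M)+2p(1-p)\cdot 2$, solved to $M\ge 1+\tfrac{1}{2p(1-p)}\ge 3$ --- is correct. The key step, that the conditional cost after ``both idle'' or ``both collide'' is $\ge 1+M$, is justified because the continuation is itself a two-party contention instance under the shifted policy $w\mapsto f(0w)$ (resp.\ $w\mapsto f(2_+w)$), and $M$ is optimal. One small gap: you divide by $2p(1-p)$, which requires $p_0\in(0,1)$; but if $p_0\in\{0,1\}$ the same inequality reads $M\ge 1+M$, contradicting the already-established finiteness $M\le 4$, so this case cannot occur. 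With that caveat noted, your proof is complete.

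In summary: both arguments work; the paper's is shorter and uses only the pointwise gap $\max\ge\min+1$ together with the {\sc min} lower bound, while yours is a self-contained first-step analysis that does not cite Theorem~\ref{thm:min_short} for the lower bound at all.
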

\begin{proof}
The optimal policy under the {\sc min} objective, $f^*$, 
sends with probability 1/2 until successful.  It is easy to see
that $\mathbb{E}\max(X_1^{f^*}, X_2^{f^*}) = 4$, so $f$ can do no worse.
Under $f$ (or any policy), $\mathbb{E}\max(X_1^f,X_2^f) \ge 1 + \mathbb{E}\min(X_1^f,X_2^f)$.  By the optimality of $f^*$ for {\sc min}, 
$\mathbb{E}\min(X_1^f,X_2^f) \ge \mathbb{E}\min(X_1^{f^*},X_2^{f^*}) = 2$,
so $\mathbb{E}\max(X_1^f,X_2^f) \ge 3$.
\end{proof}

\begin{theorem}\label{thm:min_long}
Let $f$ be the optimal policy for the {\sc max} objective and define 
$1/\gamma = \mathbb{E}\max(X_1^f,X_2^f)$ to be its expected cost.
Let $x_1,x_2$ be the roots of the polynomial 
\begin{align}
x^2-(2-\gamma)x+1.\label{eq:x_definition}
\end{align}
There exists an integer $N\geq 0$ and reals 
$C_1,C_2$ where $C_1x_1^{-1}+C_2x_2^{-1}=0$ and $C_1x_1^{N+1}+C_2x_2^{N+1}=-1$, 
such that the following probability sequence
\begin{align*}
p_k=1-\frac{C_1x_1^k+C_2x_2^k+1}{C_1x_1^{k-1}+C_2x_2^{k-1}+1},\quad 0\leq k \leq N+1,
\end{align*}
induces an optimal policy that minimizes $\mathbb{E}\max(X_1,X_2)$.
\end{theorem}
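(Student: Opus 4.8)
The plan is to follow the blueprint of Theorem~\ref{thm:min_average}, exploiting the identity $\mathbb{E}\max(X_1,X_2)=2\mathbb{E}X_1-\mathbb{E}\min(X_1,X_2)$. By Theorem~\ref{thm:recurrent} I may assume the optimal policy $f$ is given by a probability sequence $(p_k)_{k\ge 0}$, and I set $m_k=\prod_{i=0}^k(1-p_i)$ with $m_{-1}=1$ as before. Combining Eqns.~(\ref{eq:average_mk}) and~(\ref{eq:min_mk}) through $\mathbb{E}\max=2\mathbb{E}X_1-\mathbb{E}\min$ gives
\[
\mathbb{E}\max(X_1,X_2)=\frac{\sum_{k\ge 0}\left(2m_{k-1}-m_{k-1}^2\right)}{1-\sum_{k\ge 0}(m_{k-1}-m_k)^2}=:\frac{A}{B}.
\]
By Lemma~\ref{lem:long_bound} the optimal cost lies in $[3,4]$, so all series converge and $A,B>0$; moreover the ``delete a repeated term'' argument from Theorem~\ref{thm:min_average} goes through (deleting a positive $m_k$ with $m_{k-1}=m_k$ leaves $B$ unchanged but strictly decreases $A$, since $2m-m^2>0$ on $(0,1]$), so an optimal $(m_k)$ is strictly decreasing until it possibly reaches $0$.

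Next I would impose first-order optimality. For any $k_0\ge 0$ with $m_{k_0}>0$, strict monotonicity gives $m_{k_0-1}>m_{k_0}>m_{k_0+1}$, so $m_{k_0}$ can be perturbed in both directions within the feasible set and $\partial(A/B)/\partial m_{k_0}=0$. Since $m_{k_0}$ enters $A$ only through the term $2m_{k_0}-m_{k_0}^2$ and enters $B$ only through $(m_{k_0-1}-m_{k_0})^2$ and $(m_{k_0}-m_{k_0+1})^2$, a short computation gives $B(1-m_{k_0})=A\,(m_{k_0+1}-2m_{k_0}+m_{k_0-1})$; writing $\gamma=B/A=\mathbb{E}\max(X_1,X_2)^{-1}$ this is
\[
m_{k_0+1}-(2-\gamma)m_{k_0}+m_{k_0-1}=\gamma .
\]
This linear recurrence has the constant particular solution $m_k\equiv 1$, and its homogeneous part has characteristic polynomial $x^2-(2-\gamma)x+1$, which is exactly Eqn.~(\ref{eq:x_definition}); since $\gamma\in[1/4,1/3]$ we have $\gamma\notin\{0,4\}$, so the roots $x_1,x_2$ are distinct and $m_k=C_1x_1^k+C_2x_2^k+1$ throughout the range where the recurrence holds.

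It remains to show the sequence is finite, which is the only genuinely new ingredient relative to the {\sc avg} argument. Because $x_1x_2=1$ and the discriminant $(2-\gamma)^2-4=\gamma(\gamma-4)<0$, the roots are complex conjugates of modulus $1$, so $m_k=1+R\cos(k\theta+\varphi)$ for some reals $R$, $\theta$, $\varphi$ with $\cos\theta=1-\gamma/2\in(0,1)$, hence $\theta\ne 0$. If $R=0$ then $p_k\equiv 0$ and the cost is infinite, contradicting optimality; so $R>0$, and then $(1+R\cos(k\theta+\varphi))_{k}$ genuinely oscillates and cannot be non-increasing. Therefore there is a finite $N\ge 0$ with $m_N>0$ and $m_{N+1}=0$, the closed form holds for $-1\le k\le N+1$, and the boundary conditions $m_{-1}=1$ and $m_{N+1}=0$ become $C_1x_1^{-1}+C_2x_2^{-1}=0$ and $C_1x_1^{N+1}+C_2x_2^{N+1}=-1$. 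Setting $p_k=1-m_k/m_{k-1}$ for $0\le k\le N+1$ (note $p_{N+1}=1$, which forces a reset) yields the claimed probability sequence.

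I expect the termination step to be the main obstacle: in the {\sc avg} case the closed form is a quadratic that is obviously unbounded, whereas here one must recognize that the characteristic roots are unit-modulus complex numbers and that a non-degenerate oscillation cannot be monotone. A secondary point requiring care is that the first-order conditions must actually be applicable at every index with $m_{k_0}>0$ — including the last one, $k_0=N$, where $m_{N+1}=0$ is still a strict drop — which is precisely why the strict-monotonicity consequence of the deletion argument has to be in place first.
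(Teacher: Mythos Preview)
Your proposal is correct and tracks the paper's proof almost step for step: the same identity $\mathbb{E}\max = 2\mathbb{E}X_1 - \mathbb{E}\min$ leading to Eqn.~(\ref{eq:longer_mk}), the same deletion argument for strict monotonicity, the same first-order condition yielding the recurrence $m_{k_0+1}=(2-\gamma)m_{k_0}-m_{k_0-1}+\gamma$, and the same characteristic-root analysis.

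The only divergence is the termination step, which you correctly flagged as the new ingredient. The paper does \emph{not} use your oscillation argument; instead it sums the recurrence (\ref{eq:r}) over all $v\ge 0$: if every $m_k>0$ the constant term contributes $\gamma\cdot\infty$, forcing $\sum_k m_k=\infty$, and this is then contradicted by the crude estimate $4\ge \mathbb{E}\max(X_1,X_2)\ge 1+\sum_{k\ge 0} m_k$ coming from Lemma~\ref{lem:long_bound}. Your route via $m_k=1+R\cos(k\theta+\varphi)$ is equally valid; if you want to tighten the phrase ``genuinely oscillates and cannot be non-increasing,'' the quickest finish is to observe that a non-increasing bounded sequence converges, and passing to the limit in the recurrence forces that limit to be $1$, contradicting $m_k\le m_0<1$. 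The paper's argument buys you a direct link to Lemma~\ref{lem:long_bound}; yours avoids any summation and is arguably more elementary.
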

\begin{remark}
Note that $p_{N+1}=1$, thus it is sufficient to only define $p_0,p_1,\ldots,p_{N+1}$.
\end{remark}
\begin{proof}
Assume the optimal policy $f$ is characterized by the probability sequence $(p_k)_{k=0}^\infty$. Using the derived expressions (Eqn.~(\ref{eq:average_mk}) and Eqn.~(\ref{eq:min_mk})) in Theorem \ref{thm:min_average} and \ref{thm:min_short}, we have 
\begin{align}
\mathbb{E}\max(X_1,X_2)&= 2\mathbb{E}X_1-\mathbb{E}\min(X_1,X_2)\nonumber\\
&=2\frac{\sum_{k=0}^\infty m_{k-1}}{1-\sum_{k=0}^\infty (m_{k-1}-m_k)^2}-\frac{\sum_{k=0}^\infty m_{k-1}^2}{1-\sum_{k=0}^\infty(m_{k-1}-m_k)^2}\nonumber\\
&=\frac{2\sum_{k=0}^\infty m_{k-1}-\sum_{k=0}^\infty m_{k-1}^2}{1-\sum_{k=0}^\infty (m_{k-1}-m_k)^2},\label{eq:longer_mk}
\end{align}
where $m_k=\prod_{i=0}^k (1-p_i)$ with $m_{-1}=1$.

The only requirement on the sequence $(m_k)_{k=-1}^\infty$ is that it is strictly decreasing with $m_{k}\in[0,1]$. 
First we observe if $m_{k}=m_{k+1}$, we must have both of them equal to zero. Otherwise, we can remove $m_{k}$ which will leave the denominator 
unchanged but reduce the numerator. 
Therefore, the optimal sequence is either a strictly decreasing sequence or a strictly decreasing sequence followed by a tail of zeros. 
Fix any $v\ge 0$ for which $m_v > 0$ we have, by the optimality of $(m_k)_{k=-1}^\infty$,
\begin{align}
\frac{\partial \mathbb{E}\max(X_1,X_2)}{\partial m_v}=\frac{(2-2m_v)B-2A(m_{v-1}-m_v-(m_v-m_{v+1}))}{D}=0,\label{eq:long_derivative}
\end{align}
where 
$B=1-\sum_{k=0}^\infty(m_{k-1}-m_k)^2$ and 
$A= 2\sum_{k=0}^\infty m_{k-1}  -\sum_{k=0}^\infty m_{k-1}^2$. 
Let $\gamma = \frac{B}{A}=\frac{1}{\mathbb{E}\max(X_1,X_2)}$, then we have, from Eqn.~(\ref{eq:long_derivative})
\begin{align}
&m_{v+1}=(2-\gamma)m_v-m_{v-1}+\gamma \label{eq:r}\\
\iff & (m_{v+1}-1) = (2-\gamma)(m_v-1)-(m_{v-1}-1) \label{eq:r1}
\end{align}
Eqn.~(\ref{eq:r1}) defines a linear homogeneous recurrence
relation for the sequence $(m_{v+1}-1)$, whose 
characteristic roots are $x_1,x_2=\frac{2-\gamma\pm\sqrt{\gamma^2-4\gamma}}{2}$.
One may verify that they satisfy the following identities.
\begin{align}
&x_1+x_2=2-\gamma \label{eq:long_sum_identity}\\
&x_1x_2=1 \label{eq:long_product_identity}
\end{align}
From Lemma \ref{lem:long_bound} we know $\gamma\in [\frac{1}{4},\frac{1}{3}]$. Thus we have $\gamma^2-4\gamma <0$ which implies $x_1$ and $x_2$ are distinct conjugate numbers and of the same norm $\sqrt{x_1x_2}=1$. 
Then $m_k -1 = C_1x_1^k + C_2x_2^k$ for all $k$ for which at least one of $m_{k-1},m_k$ or $m_{k+1}$ is greater than zero.

If it were the case that $m_k>0$ for all $k$, 
then by summing (\ref{eq:r}) up for all $v \in\mathbb{N}$, we have
\begin{align*}
\sum_{k=0}^\infty m_{k+1}=(2-\gamma)\sum_{k=0}^\infty m_k - \sum_{k=0}^\infty m_{k-1} + \gamma \cdot \infty
\end{align*}
which implies $\sum_{k=0}^\infty m_k=\infty$. This is impossible since, by
the upper bound of Lemma~\ref{lem:long_bound},
\[
4 \ge \mathbb{E}\max(X_1,X_2)=\frac{1+2\sum_{k=0}^\infty m_{k} -\sum_{k=0}^\infty m^2_{k}}{1-\sum_{k=0}^\infty (m_{k-1}-m_k)^2}\geq \frac{1+\sum_{k=0}^\infty m_{k}}{1}.
\]
Therefore the optimal sequence must be of the form
\begin{align*}
m_k =(C_1x_1^k + C_2x_2^k + 1)\mathbbm{1}_{k\leq N}
\end{align*}
for some integer $N\geq 0$, where $C_1x_1^N+C_2x_2^N+1=0$ and $C_1x_1^{-1}+C_2x_2^{-1}+1=1$. Writing $p_k=1-\frac{m_k}{m_{k-1}}$ gives the statement of the theorem.
\end{proof}

The proof of Corollary \ref{cor:long} is in Appendix \ref{appendix:proof_long}.

\begin{corollary}\label{cor:long}
{\bf {\sc max}-Contention Resolution} 
is an optimal protocol
for $n=2$ devices under the {\sc max} objective.  The expected maximum latency is 
$1/\gamma \approx 3.33641$,
where $\gamma$ is the unique root of
$3x^3 - 12x^2 + 10x - 2$ in the interval $[1/4,1/3]$.
\end{corollary}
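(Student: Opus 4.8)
The plan is to push Theorem~\ref{thm:min_long} to its conclusion. That theorem exhibits the optimal policy as one member of a countable family of candidates indexed by an integer $N\ge 0$: for each $N$ the sequence $(m_k)$ obeys the recurrence $m_{v+1}=(2-\gamma)m_v-m_{v-1}+\gamma$ of Eqn.~(\ref{eq:r}) together with $m_{-1}=1$ and $m_{N+1}=0$, where $\gamma=1/\mathbb{E}\max(X_1,X_2)$. Fixing $N$, this recurrence and the two boundary conditions determine $m_0,\dots,m_N$ as rational functions of $\gamma$; since $m_k=0$ for $k>N$, the quantities $A=2\sum_k m_{k-1}-\sum_k m_{k-1}^2$ and $B=1-\sum_k(m_{k-1}-m_k)^2$ from the proof of Theorem~\ref{thm:min_long} become finite, and imposing self-consistency $\gamma=B/A$ collapses (after clearing denominators) to a single polynomial equation $P_N(\gamma)=0$ whose relevant root lies in $[1/4,1/3]$ by Lemma~\ref{lem:long_bound}. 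Each such root yields a candidate protocol of cost $c_N:=1/\gamma$, and the corollary is the assertion that $N=1$ (a three-step protocol) is the global minimizer.

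For $N=1$ I would unfold the recurrence from $m_{-1}=1$ and $m_2=0$: writing $q=\gamma^2-4\gamma+3=(\gamma-1)(\gamma-3)$ one gets $m_0=(q-1)/q$ and $m_1=(\gamma^2-3\gamma+1)/q$, hence $1-m_0=1/q$, $m_0-m_1=(1-\gamma)/q$, and $m_1-m_2=m_1$. Substituting these into $\gamma A=B$ and clearing denominators, the spurious factor $q$ cancels and one is left with $3\gamma^3-12\gamma^2+10\gamma-2=0$; this cubic is strictly increasing on $[1/4,1/3]$ and changes sign there, so it has the unique root $\gamma\approx0.299723$, giving cost $1/\gamma\approx3.33641$. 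Finally, $\alpha:=p_0=1-m_0$ satisfies $\alpha q=1$ and $\beta:=p_1=(m_0-m_1)/m_0$ satisfies $\beta(\gamma^2-4\gamma+2)=1-\gamma$, and eliminating $\gamma$ against the cubic (for $\alpha$ the relation $\gamma=2\alpha/(3+\alpha)$ falls out; for $\beta$ a short resultant computation) yields $\alpha^3+7\alpha^2-21\alpha+9=0$ and $4\beta^3-8\beta^2+3=0$, each with a unique root in $[0,1]$. Thus the $N=1$ candidate is exactly {\bf {\sc max}-Contention Resolution}.

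It remains to rule out every $N\ne 1$. For $N=0$ the same unfolding gives $m_0=(1-\gamma)/(2-\gamma)$ and self-consistency polynomial $2\gamma^3-8\gamma^2+9\gamma-2=0$, whose root in $[1/4,1/3]$ is $\gamma\approx0.295$, i.e.\ $c_0\approx3.39>c_1$. For $N\ge 2$ I would pass to the sinusoidal form of the solution: since $\gamma\in(0,1/3]$ forces $\gamma^2-4\gamma<0$, the characteristic roots are $x_1=\overline{x_2}=e^{i\theta}$ with $2\cos\theta=2-\gamma$, so $m_k=1-\rho\sin\!\bigl((k+1)\theta\bigr)$ for some $\rho>0$; requiring $(m_k)$ to be strictly decreasing on $-1\le k\le N$ and to vanish at $k=N+1$ confines $\theta$ to an interval which, for $N\ge 2$, lies strictly below the value $\theta_1$ attained at $N=1$ and shrinks toward $0$ as $N\to\infty$. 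Since $c_N=1/(2-2\cos\theta)$ decreases in $\theta$, this forces $c_N>c_1$ for all $N\ge 2$ (with $N=2$ only marginally worse). Hence $N=1$ uniquely minimizes the cost, and {\bf {\sc max}-Contention Resolution} is optimal with expected makespan $1/\gamma$.

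The step I expect to be the main obstacle is this last one: because $\deg P_N$ grows with $N$, rather than solving each equation I need the monotonicity estimate just sketched — that as $N$ increases past $1$ the admissible root of $P_N$ moves strictly toward $\gamma=0$ (equivalently, the admissible window for $\theta$ slides strictly downward). Everything else — the $N=0$ and $N=1$ algebra, the passage from the cubics to the stated numerical values, and the derivation of the $\alpha$- and $\beta$-polynomials — is routine.
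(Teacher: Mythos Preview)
Your outline is correct, but your route for ruling out $N\ge 2$ is genuinely different from the paper's, and it is worth seeing both.

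\textbf{What the paper does.} The paper never studies the family $\{P_N\}$ one value of $N$ at a time. Instead it substitutes the closed form $m_k=C_1x_1^k+C_2x_2^k+1$ directly into Eqn.~(\ref{eq:longer_mk}), sums the finite geometric series, and by repeated use of $x_1x_2=1$ and $x_1+x_2=2-\gamma$ collapses everything to the single identity
\[
\gamma(N+2)+m_N-1=0.
\]
Since $m_N>0$ and $\gamma\ge 1/4$ by Lemma~\ref{lem:long_bound}, this gives $N+2<1/\gamma\le 4$, hence $N\in\{0,1\}$, with no numerics and no monotonicity estimate required. The two remaining cases are then solved exactly (for $N=0$ the paper gets $\gamma=(2-\sqrt2)/2$; your cubic $2\gamma^3-8\gamma^2+9\gamma-2$ factors as $(\gamma-2)(2\gamma^2-4\gamma+1)$, so the relevant root is $(2-\sqrt2)/2\approx0.2929$ and $c_0\approx3.414$, not $0.295$ and $3.39$).

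\textbf{What you do.} Your sinusoidal feasibility argument can be made rigorous and then gives the same bound. With $m_k=1-\rho\sin((k+1)\theta)$, strict decrease on $-1\le k\le N+1$ is equivalent to $\cos\bigl((k+\tfrac12)\theta\bigr)>0$ for $k=0,\dots,N+1$, which forces $(N+\tfrac32)\theta<\pi/2$. For the \emph{optimal} policy Lemma~\ref{lem:long_bound} gives $\gamma\ge1/4$, i.e.\ $\theta\ge\arccos(7/8)\approx0.505$, whence $N+\tfrac32<\pi/1.01<3.11$ and again $N\le1$. So the ``main obstacle'' you flag is not really an obstacle: you do not need a monotonicity statement about the roots of $P_N$, only the feasibility bound combined with Lemma~\ref{lem:long_bound}. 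Framed this way your argument uses exactly the same two ingredients as the paper's, just packaged trigonometrically instead of through the algebraic identity above; the paper's version is shorter and avoids any numerical comparison with $\theta_1$.

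Your $N=1$ algebra (the derivation of $3\gamma^3-12\gamma^2+10\gamma-2=0$, the relation $\gamma=2\alpha/(3+\alpha)$ giving $\alpha^3+7\alpha^2-21\alpha+9=0$, and the resultant giving $4\beta^3-8\beta^2+3=0$) is correct and matches the paper's conclusions.
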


\section{Conclusion}
In this paper we established the existence of optimal contention resolution policies for any \emph{reasonable} cost metric,
and derived the first optimal protocols for resolving conflicts
between $n=2$ parties under the {\sc avg}, {\sc min}, and {\sc max}
objectives.

Generalizing our results to $n\ge 3$ or to more complicated cost metrics
(e.g., the $\ell_2$ norm) is a challenging problem.  Unlike the $n=2$ case,
it is not clear, for example, whether the optimal protocols for $n=3$
select their transmission probabilities from a finite set of reals.
It is also unclear whether the optimal protocols for $n=3$ 
satisfy some analogue of Theorem~\ref{thm:recurrent}, i.e., that
they are ``recurrent'' in some way.

\bibliographystyle{plain}
\bibliography{refs}

\newpage
\begin{appendices}

\section{Proof of Theorems}
\label{appendix:proof}
\subsection{Deduction on Random Board Model}

In order to implement a contention resolution policy we need
to generate biased random bits, e.g., in order 
to transmit with probability $1/3$.
However, the bias of each time step could be different and depend
on the outcome of previous time steps.  It will be convenient if we 
can generate all randomness in advance and dynamically set the biases as we go.
To that end we define the Random Board model, which is just an infinite number
of uniform and i.i.d.~random reals in $[0,1]$.  If device $k$ in time step $t$ wants to generate a biased random bit $b$ with probability $1/3$ of 1,
it sets $b = \mathbbm{1}(U_{k,t} < 1/3)$, 
where $(U_{k,t})$ is the random board.  

\begin{definition}[Random Board]
A random board $U$ with $n$ rows are a set of $\textrm{i.i.d.}$ uniformly distributed random variables $(U_{k,t})_{k\in[n], t\in\mathbb{N}}$ 
with range $[0,1]$.
\end{definition}

\begin{definition}[Deduction on Random Board]
Given a policy $f\in \mathcal{F}$ and an outcome $(u_{k,t})_{k\in[n],t\in\mathbb{N}}$ of an $n$-row random board where $u_{k,t}\in[0,1]$, we deduce $(\bar{d}_{k,t},\bar{r}_{k,t})_{k\in[n],t\in\mathbb{N}}$ iteratively by the following rule.
\begin{align*}
\bar{d}_{k,t}&=\mathbbm{1}(u_{k,t}< f((\bar{r}_{k,0}\bar{r}_{k,1}\ldots\bar{r}_{k,t-1}))) & \mbox{(1 iff device $k$ transmits at time $t$)}\\
\bar{r}_{k,t}&= \begin{cases}
0,&\bar{d}_{k,t}=0\\
1,&(\bar{d}_{k,t}=1) \land (\forall j\neq k, \bar{d}_{j,t}=0)\\
2_+,&(\bar{d}_{k,t}=1) \land (\exists j\neq k, \bar{d}_{j,t}=1)
\end{cases} & \mbox{(the response at time $t+0.5$)}
\end{align*}
We define $\bar{D}_{k,t}$ and $\bar{R}_{k,t}$ be the random variables that maps outcomes to the deduced $\bar{d}_{k,t}$ and $\bar{r}_{k,t}$ respectively. 
\end{definition}

We demonstrate the deduction on random board by the following example.
\begin{example}
Let Table \ref{tab:outcome3} be an outcome $w_0$ of a 3-row random board.

\begin{table}[h!]
\centering
\begin{tabular}{ccccccc}
 0.23371 & 0.281399 & 0.375409 & 0.927202 & 0.0824814 & 0.0473227 &\ldots \\
 0.216321 & 0.4534 & 0.377702 & 0.573771 & 0.704855 & 0.497943 &\ldots \\
 0.888769 & 0.939998 & 0.261829 & 0.343283 & 0.830001 & 0.43118 &\ldots \\
\end{tabular}
\caption{An outcome $w_0$ of a 3-row random board}
\label{tab:outcome3}
\end{table}

Let $f_1$ be the policy with constant sending probability $\frac{1}{2}$ until the message gets successfully transmitted. Let $f_2$ be the policy with constant sending probability $\frac{1}{3}$ before success. Then the results of deduction are shown in the following two tables $(\bar{d}_{k,t},\bar{r}_{k,t})_{k\in [n],t\in\mathbb{N}}$ (successful transmissions are marked by a star $*$).
\begin{table}[h!]
\centering
\begin{tabular}{ccccccc}
 (send,$2^+$) & (send,$2^+$) & (send,$2^+$) & (idle,0) & (send*,1) & (idle,0) &\ldots \\
 (send,$2^+$) & (send,$2^+$) & (send,$2^+$) & (idle,0) & (idle,0) & (send*,1) &\ldots \\
 (idle,0) & (idle,0) & (send,$2^+$) & (send*,1) & (idle,0) & (idle,0) &\ldots \\
\end{tabular}
\caption{Deduction on $w_0$ using policy $f_1$ (constant sending probability $\frac{1}{2}$)}
\end{table}

\begin{table}[h!]
\centering
\begin{tabular}{ccccccc}
 (send,$2^+$) & (send*,1) & (idle,0) & (idle,0) & (idle,0) & (idle,0) &\ldots \\
 (send,$2^+$) & (idle,0) & (idle,0) & (idle,0) & (idle,0) & (idle,0) &\ldots \\
 (idle,0) & (idle,0) &  (send*,1)  & (idle,0) & (idle,0) & (idle,0) &\ldots \\
\end{tabular}
\caption{Deduction on $w_0$ using policy $f_2$ (constant sending probability $\frac{1}{3}$)}
\end{table}
\end{example}

\begin{lemma}\label{lem:couple}
Given an $n$-row random board $U$, for any policy $f\in\mathcal{F}$, the deduced set of random variables $(\bar{D}_{k,t},\bar{R}_{k,t})_{k\in [n],t\in\mathbb{N}}$ satisfies both equations (\ref{eq:send_prob}) and (\ref{eq:response}).
\end{lemma}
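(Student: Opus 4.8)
The plan is to check the two defining relations~(\ref{eq:send_prob}) and~(\ref{eq:response}) in turn. Equation~(\ref{eq:response}) is immediate: the deduction rule defining $\bar r_{k,t}$ is, symbol for symbol, the pointwise case analysis of~(\ref{eq:response}) with $\bar D_{j,t}$ in place of $D_{j,t}$, so $\bar R_{k,t}$ satisfies~(\ref{eq:response}) as a deterministic identity on every outcome. Thus the real content of the lemma is the distributional statement~(\ref{eq:send_prob}), which I would prove by induction on $t$.

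To set this up, let $\mathcal{G}_t=\sigma\big((U_{j,s})_{j\in[n],\,0\le s\le t}\big)$ be the $\sigma$-algebra generated by the first $t+1$ columns of the board (with $\mathcal{G}_{-1}$ trivial). First I would argue, by induction on $t$, that every $\bar D_{k,t}$ and $\bar R_{k,t}$ is $\mathcal{G}_t$-measurable: indeed $\bar D_{k,t}$ is a function of $U_{k,t}$ together with the string $\bar R_{k,0}\cdots\bar R_{k,t-1}$, which is $\mathcal{G}_{t-1}$-measurable by the inductive hypothesis, and then $\bar R_{k,t}$ is a function of $(\bar D_{j,t})_{j\in[n]}$, each of which is $\mathcal{G}_t$-measurable. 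Consequently, for any fixed history $h\in\mathcal{R}^t$, the conditioning event $E_h=\{\bar R_{k,0}\bar R_{k,1}\cdots\bar R_{k,t-1}=h\}$ belongs to $\mathcal{G}_{t-1}$.

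The key step is then to exploit the independence structure of the board. By definition $U_{k,t}$ is independent of $\mathcal{G}_{t-1}$, hence of $E_h$, and is uniformly distributed on $[0,1]$; and on $E_h$ the deduction rule gives $\bar D_{k,t}=\mathbbm{1}(U_{k,t}<f(h))$. Therefore, for every $h$ with $\mathbb{P}(E_h)>0$,
\[
\mathbb{P}(\bar D_{k,t}=1\mid E_h)=\mathbb{P}\big(U_{k,t}<f(h)\mid E_h\big)=\mathbb{P}\big(U_{k,t}<f(h)\big)=f(h),
\]
which is exactly~(\ref{eq:send_prob}) (for unreachable histories $h$ we have $\mathbb{P}(E_h)=0$ and the conditional probability is unconstrained). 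The main obstacle is not any deep idea but the bookkeeping: one must set up the filtration $(\mathcal{G}_t)$ and carry out the measurability induction carefully enough that the conditioning event genuinely lies ``in the past'' of $U_{k,t}$, so that conditioning leaves the uniform law of $U_{k,t}$ undisturbed — after that, everything reduces to unwinding the recursive definition.
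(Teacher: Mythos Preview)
Your argument is correct and follows the same approach as the paper: equation~(\ref{eq:response}) holds by definition, and for~(\ref{eq:send_prob}) one uses that $U_{k,t}$ is independent of the history to reduce $\mathbb{P}(\bar D_{k,t}=1\mid E_h)$ to $\mathbb{P}(U_{k,t}<f(h))=f(h)$. The paper's proof is a two-line version of exactly this computation, whereas you have supplied the filtration and measurability bookkeeping that the paper leaves implicit.
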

\begin{proof}
Equation (\ref{eq:response}) follows directly by the definition of deduction. For equation (\ref{eq:send_prob}), we have
\begin{align*}
\mathbb{P}(\bar{D}_{k,t}=1\mid \bar{R}_{k,1}\bar{R}_{k,2}\ldots\bar{R}_{k,t}=h)&=\mathbb{P}(U_{k,t}< f(h))\\
&=f(h).
\end{align*}
\end{proof}

From Lemma \ref{lem:couple} we know, using the same $f$, the random processes $(D_{k,t},R_{k,t})_{k\in[n],t\in\mathbb{N}}$ and $(\bar{D}_{k,t},\bar{R}_{k,t})_{k\in[n],t\in\mathbb{N}}$ are of identical distributions and thus $(X_i)_{i\in[n]}$ and $(\bar{X}_i)_{i\in[n]}$ are also identically distributed. Having this, we now can consider all the policies working on a same sample space, i.e. the sample space of the random board. 

\subsection{Existence Issues}\label{appendix:proof_existence}
For notation, we denote any random variable $X$ (e.g. $X_i$, $D_{k,t}$ and $R_{k,t}$) induced by policy $f$ as $X^f$. 

Having the random board model set up, we now can go on proving the existence theorem. We first formally define the class of \emph{reasonable} objective functions.

\begin{definition}
We say an objective function $T:\mathbb{R}^n\to\mathbb{R}^+$ is \emph{reasonable} if for any $s>0$, there exists some $N_s>0$ such that $\mathbbm{1}(T(X_1,X_2,\ldots,X_n)<s)$ only depends on the deduction of the first $N_s$ time slots. In other words, there exists a function $h$ such that $\mathbbm{1}(T(X_1,X_2,\ldots,X_n)<s)=h(\{D_{k,t},R_{k,t}\}_{k\in[n],t\in[N_s]})$.
\end{definition}

\begin{proof}[Proof of Theorem \ref{thm:existence}]
Let $g:\mathcal{F}\to \mathbb{R}_+\cup\{\infty\}$ be the function that maps policy $f$ to $\mathbb{E}T(X_1^f,X_2^f,\ldots,X_n^f)$. For simplicity, we denote $(X_1^f,X_2^f,\ldots,X_n^f)$ by $X^f$.

Define $J$ as $\inf\{g(f)\mid f\in\mathcal{F}\}$ which exists since $g$ is bounded below ($g$ is non-negative). Since the set of finite strings is countable, we can identify $\mathcal{F}$ by $[0,1]^\mathbb{N}$. Note that $[0,1]$ is compact and thus, by diagonalization argument, we can find a sequence of $(f_i)_{i\in\mathbb{N}}$ where $f_i\in\mathcal{F}$ converges to $f^*$ point-wisely with $\lim_{i\to \infty} g(f_i)=J$.  We have to show $g(f^*)=J$. By definition, we have
\begin{align*}
g(f^*)&=\mathbb{E}T(X^{f^*})\\
&=\int_{0}^\infty \mathbb{P}(T(X^{f^*})\geq s)ds\\
&=\int_{0}^\infty (1-\mathbb{P}(T(X^{f^*})< s))ds
\end{align*}
For a fixed $s>0$, by the definition of a reasonbale objective function, there exists a constant $N_s>0$ such that $\mathbbm{1}(T(X^f)<s)$ only depends on the deduction of the first $N_s$ time slots. Since $f_j$ converges to $f^*$ point-wisely and the set of strings $\{w\mid |w|<N_s\}$ is finite\footnote{$|w|$ is equal to the length of the string $w$.}, we can choose $\epsilon$ and $N$ so that for any $j>N$ and $w$ with $|w|<N_s$, we have $|f_j(w)-f^*(w)|<\epsilon$. Then for the two policies $f_j$ and $f^*$, the probability that the deduction of the first $N_s$ time slots will not change is larger than $(1-\epsilon)^{nN_s}$. In fact, let $w$ be the history of a user with $|w|<N_s$ and $U$ be the cell on the random board that will be used by the user, we have $\mathbb{P}(\mathbbm{1}(U<f(w))=\mathbbm{1}(U<f^*(w)))=1-|f(w)-f^*(w)|\geq (1-\epsilon)$. Therefore, $\mathbb{P}(E_k\mid E_{k-1})\geq (1-\epsilon)^n$, where $E_k$ is the event that the deductions of the first $k$ columns are identical. We get the estimation we want by writing out
\begin{align*}
    \mathbb{P}(E_{N_s-1})&=\mathbb{P}(E_{N_s-1}\mid E_{N_{s}-2})\mathbb{P}(E_{N_{s}-2}\mid E_{N_s-3})\cdots \mathbb{P}(E_1\mid E_0)\mathbb{P}(E_0)\\
    &\geq (1-\epsilon)^{n N_s}
\end{align*}

Thus, let $\Omega$ be the sample space of the random board,
\begin{align*}
    |\mathbb{P}(T(X^{f_j})<s)-\mathbb{P}(T(X^{f^*})<s)| &=  \left|\int_{ \Omega}\mathbbm{1}(T(X^{f_j}(w))<s)-\mathbbm{1}(T(X^{f^*}(w))<s) dw\right|\\
    &\leq \int_{\Omega}\left|\mathbbm{1}(T(X^{f_j}(w))<s)-\mathbbm{1}(T(X^{f^*}(w))<s) \right|dw\\
    &=\mathbb{P}\left(\mathbbm{1}\left((T(X^{f_j})<s\right)\neq\mathbbm{1}\left(T(X^{f^*})<s)\right)\right)\\
    &= 1-\mathbb{P}\left(\mathbbm{1}\left((T(X^{f_j})<s\right)=\mathbbm{1}\left(T(X^{f^*})<s)\right)\right)\\
    &\leq 1-\mathbb{P}(E_{N_s})\\
    &\leq 1-(1-\epsilon)^{nN_s}
\end{align*} 
which can be made arbitrarily small by choosing $\epsilon$ small enough. Thus we have $\mathbb{P}(T(X^{\lim_{j\to\infty}f_j})<s)=\lim_{j\to \infty}\mathbb{P}(T(X^{f_j})<s)$. Therefore we have
\begin{align*}
g(f^*)&=\int_{0}^\infty (1-\mathbb{P}(T(X^{f^*})< s))ds\\
&=\int_{0}^\infty (1-\lim_{j\to \infty}\mathbb{P}(T(X^{f_j})< s))ds\\
&=\int_{0}^\infty\lim_{j\to \infty}\mathbb{P}(T(X^{f_j})\geq s)ds\\	
&\leq\lim_{j\to \infty}\int_{0}^\infty \mathbb{P}(T(X^{f_j})\geq s)ds\\
& = J
\end{align*}
Note that we have used the Fatou's Lemma~\cite{royden1988real} to change the order of integration and the limit. We have $g(f^*)\leq J$ but $J$ is the infimum. Thus $f^*$ is an optimal policy.

\end{proof}

\subsection{Optimal Recurrent Policies --- Proof of Theorem~\ref{thm:recurrent}}\label{appendix:proof_recurrent}

\begin{proof}[Proof of Theorem \ref{thm:recurrent}]
Let $C$ be the random variable indicating the time of the first collision. We define $C=-1$ if there is no collision. By the definition of $f^*$, we know $C^{f^*}=C^f$ since $f^*(w)=f(w)$ if $2_+\not\in w$. Also, if there is no collision, the deduction of the two policies should be identical. Thus we have $\mathbb{E}(T(X_1^f,X_2^f)\mid C^f=-1)=\mathbb{E}(T(X_1^{f^*},X_2^{f^*})\mid C^{f^*}=-1)=\vcentcolon M$. Let $\mathbb{P}(C^f=k)=\mathbb{P}(C^{f^*}=k)=q_k$ and $f_k(w)=f(0^k2_+w)$ for any $w\in\mathcal{R}^*$. Note that both users must keep idle before the first collision (if there is a collision). Therefore if the first collision happens at time $k$, after time $k$, the two users would behave as if they have restarted the process with a new policy $f_k$. Then we have,
\begin{align*}
    \mathbb{E}T(X_1^f,X_2^f)&=Mq_{-1} + \sum_{k=0}^\infty \mathbb{E}(T(X_1^{f},X_2^{f})\mid C^f=k)q_k\\&=Mq_{-1} + \sum_{k=0}^\infty \mathbb{E}T(k+1+X_1^{f_k},k+1+X_2^{f_k})q_k\\
    &=Mq_{-1} +\sum_{k=0}^\infty (k+1)q_k+ \sum_{k=0}^\infty \mathbb{E}T(X_1^{f_k},X_2^{f_k})q_k\\
    &\geq Mq_{-1} +\sum_{k=0}^\infty (k+1)q_k+ \mathbb{E}T(X_1^{f},X_2^{f})\sum_{k=0}^\infty q_k
\end{align*}
Note that if $\sum_{k=0}^\infty q_k =1$, then $q_{-1}=0$ and we would have $\sum_{k=0}^\infty(k+1)q_k\leq 0$, which is impossible. Thus we must have $\sum_{k=0}^\infty q_k<1$. 

Now by the definition of $f^*$, we have $f^*(w)=f^*(0^k2_+w)$. Similarly, we have
\begin{align*}
    \mathbb{E}T(X_1^{f^*},X_2^{f^*})
    &=Mq_{-1} + \sum_{k=0}^\infty \mathbb{E}(T(X_1^{f^*},X_2^{f^*})\mid C^{f^*}=k)q_k\\
    &=Mq_{-1} + \sum_{k=0}^\infty \mathbb{E}T(k+1+X_1^{f^*},k+1+X_2^{f^*})q_k\\
    &=Mq_{-1} +\sum_{k=0}^\infty (k+1)q_k+ \sum_{k=0}^\infty \mathbb{E}T(X_1^{f^*},X_2^{f^*})q_k
\end{align*}

We conclude that $\mathbb{E}T(X_1^f,X_2^f)\geq \frac{Mq_{-1} +\sum_{k=0}^\infty (k+1)q_k}{1-\sum_{k=0}^\infty q_k}=\mathbb{E}T(X_1^{f^*},X_2^{f^*})$. Since $f$ is optimal, then $f^*$ is also optimal.
\end{proof}

\subsection{Proof of Corollary~\ref{cor:average} --- 
The {\sc Avg} Objective}\label{appendix:proof_average}

\begin{proof}[Proof of Corollary \ref{cor:average}]
By Theorem \ref{thm:min_average}, the optimal 
sequence $(m_k)_{k=-1}^\infty$ is given by $m_k=(a_0+a_1k+a_2k^2)\mathbbm{1}_{k\leq N}$, for some integer $N>0$,
where 
\begin{align}
a_0-a_1+a_2&=1\label{eq:avg_condition_1}\\
a_0+a_1N+a_2N^2&=0\label{eq:avg_condition_2}
\end{align}
Note that by the definition, we must have $m_{k-1}\leq m_k$ for all $k=0,1,\ldots, N$. We necessarily have 
\begin{align}
    &m_{-1} \geq m_0 \implies a_0-a_1+a_2\geq a_0 \label{eq:a2_condition_1}\\
    &m_{N-1} \geq m_N \implies a_0+a_1(N-1)+a_2(N-1)^2\geq a_0+a_1 N +a_2N^2 \label{eq:a2_condition_2}
\end{align}
From Eqn.~(\ref{eq:avg_condition_1}) and (\ref{eq:avg_condition_2}), we have $a_0=\frac{N-a_2(N^2+N)}{N+1}$ and $a_1=\frac{-a_2(N^2-1)-1}{N+1}$. Insert both expressions into Eqn.~(\ref{eq:a2_condition_1}) and (\ref{eq:a2_condition_2}), we get the constraints on $a_2$:
\begin{align}
a_2\in \left[-\frac{1}{N+N^2},\frac{1}{N+N^2}\right].\label{eq:a2_constraint}
\end{align}
In fact, the conditions that $m_{-1}\geq m_0$ and $m_{N-1}\geq m_N$ are sufficient for $m_{k-1}\leq m_k$ for all $k=0,1,\ldots,N$. To see this, let $g(x)=a_0+a_1x+a_2x^2$ be the quadratic function where $g(k)=m_k$ for all $k=-1,0,1,\ldots,N$. Once we know $g(-1)\geq g(0)$ and $g(N-1)\geq g(N)$, we know there exist $x_1\in [-1,0]$ and $x_2\in [N-1,N]$ such that $g'(x_1)\leq 0$ and $g'(x_2)\leq 0$. Since $g'(x)=a_1+2a_2x$ is a linear function, $g'(x_1),g'(x_2)\leq 0$ implies $g'(x)\leq 0$ for any $x\in[x_1,x_2]$. Thus $g'(x)\leq 0$ for any $x\in[0,N-1]$. We conclude that $m_{k-1}=g(k-1)\leq g(k)=m_k$ is also true for any $k=1,\ldots,N-1$.

Using Eqn.~(\ref{eq:average_mk}) derived in Theorem \ref{thm:min_average}, we have
\begin{align}
\mathbb{E}X_1&=\frac{\sum_{k=-1}^{N-1} (a_0+a_1k+a_2k^2)}{1-\sum_{k=0}^N(-a_1+a_2(-2k+1))^2}\nonumber\\
&=\frac{(N+1) (N+2) (a_2 N (N+1)-3)}{2 N \left(a_2^2 (N+1)^2 (N+2)-3\right)}\label{eq:avg_a2_N}
\end{align}
Fixing $N$, to minimize $\mathbb{E}X_1$, we have
\begin{align*}
\frac{\partial \mathbb{E}X_1}{\partial a_2}=-\frac{(N+1)^2 (N+2) (a_2^2 N (N+2) (N+1)^2-6 a_2 (N+2) (N+1)+3 N)}{2 N \left(a_2^2 (N+1)^2 (N+2)-3\right)^2}=0
\end{align*}
Then $a_2=\frac{3 N^2+9 N+6\pm\sqrt{3} \sqrt{-N^5-N^4+13 N^3+37 N^2+36 N+12}}{N^4+4 N^3+5 N^2+2 N}.$
It is easy to check that $a_2$ becomes complex for all $N\geq 5$, 
which implies $\frac{\partial \mathbb{E}X_1}{\partial a_2}\leq 0$ for any $a_2$.  That is to say for $N\geq 5$, the larger $a_2$ is, the smaller $\mathbb{E}X_1$ is.
Thus the best $a_2$ one can pick under the constraint (\ref{eq:a2_constraint}) is $\frac{1}{N+N^2}$. In this case, by inserting $a_2=\frac{1}{N+N^2}$ into Eqn.~(\ref{eq:avg_a2_N}) we have
\begin{align}
\mathbb{E}X_1&=\frac{(N+1) (N+2) (1-3)}{2 N \left( \frac{N+2}{N^2}-3\right)}\nonumber\\
&=\frac{N(N+1)(N+2)}{3N^2-N-2}.\label{eq:avg_N}
\end{align}
Then we differentiate Eqn.~(\ref{eq:avg_N}) by $N$:
\begin{align*}
\frac{\partial \mathbb{E}X_1}{\partial N} = \frac{1}{15} \left(-\frac{8}{(3 N+2)^2}-\frac{18}{(N-1)^2}+5\right)>0
\end{align*}
for $N\geq 5$. Therefore, in the case where $N\geq 5$, it it best
to choose $N=5$, yielding
$\mathbb{E}X_1=\frac{N(N+1)(N+2)}{3N^2-N-2}=\frac{105}{34}\approx 3.09$.

Now we consider the cases where $N\in\{1,2,3,4\}$. We give the optimal results for each case in the Table \ref{tab:average_solution}. Note that we have $a_2\in[-\frac{1}{N+N^2},\frac{1}{N+N^2}]$ for each $N$.
\begin{table}[h!]
\centering
\begin{tabular}{cccc}
$N$ & $\mathbb{E}X_1$ (Eqn.~(\ref{eq:avg_a2_N})) & optimal $a_2$ & optimal $\mathbb{E}X_1$ \\
\hline\\[-10pt]
$1$ & $\displaystyle\frac{3-2 a_2}{1-4 a_2^2}$ & $\displaystyle\frac{3}{2}-\sqrt{2}$ & $\displaystyle\sqrt{2}+\frac{3}{2}\approx 2.91421$ \\[15pt]
$2$ & $\displaystyle\frac{3-6 a_2}{1-12 a_2^2}$ & $\displaystyle\frac{1}{2}-\frac{1}{\sqrt{6}}$ & $\displaystyle\frac{1}{2} (\sqrt{6}+3)\approx 2.72474$ \\[15pt]
$3$ & $\displaystyle\frac{10-40 a_2}{3-80 a_2^2}$ & $\displaystyle\frac{1}{12}$ & $\displaystyle\frac{30}{11}\approx 2.72727$ \\[15pt]
$4$ & $\displaystyle\frac{5 \left(20 a_2-3\right)}{200 a_2^2-4}$ & $\displaystyle\frac{1}{20}$ & $\displaystyle \frac{20}{7}\approx 2.85714$   
\end{tabular}
\caption{Solutions for $N\in\{1,2,3,4\}$}
\label{tab:average_solution}
\end{table}

Combining all previous results we conclude that a globally optimal algorithm that minimizes the average waiting time of a two-party contention is obtained when $N=2$ and $a_2=\frac{1}{2}-\frac{1}{\sqrt{6}}$. In this case, we have $a_0=\frac{N-a_2(N^2+N)}{N+1}=\frac{1}{3} (\sqrt{6}-1)$ and $a_1=\frac{-a_2(N^2-1)-1}{N+1}=\frac{\sqrt{6}-5}{6}$. Then we have $m_0=a_0=\frac{1}{3} (\sqrt{6}-1)$, $m_1=a_0+a_1+a_2=\frac{1}{3} (\sqrt{6}-2)$ and $m_2=0$. Finally we get the optimal probability sequence.
\begin{align*}
&p_0=1-m_0=\frac{4-\sqrt{6}}{3}\approx 0.516837,\\
&p_1=1-\frac{m_1}{m_0}=\frac{1+\sqrt{6}}{5}\approx 0.689898,\\
&p_2=1-\frac{m_2}{m_1}=1.
\end{align*}
This is precisely the {\bf {\sc avg}-Contention Resolution} protocol.
\end{proof}

\subsection{Proof of Corollary~\ref{cor:long} --- The {\sc Max} Objective}\label{appendix:proof_long}

\begin{proof}[Proof of Corollary \ref{cor:long}]
We continue to use all the derived results in Theorem \ref{thm:min_long}. Remember that $x_1$ and $x_2$ depend on $\gamma$, and satisfy the 
equalities $x_1+x_2 = 2-\gamma$ and $x_1x_2=1$.  
Recall that $1/\gamma= \mathbb{E}\max(X_1,X_2)$.
Using the fact that $m_{-1}=1$, it follows from Eqn.~(\ref{eq:longer_mk}) 
that $\gamma$ can be written as
\begin{align*}
\gamma  &=\frac{1-\sum_{k=0}^{N+1} (m_{k-1}-m_k)^2}{1+2\sum_{k=0}^N m_{k} -\sum_{k=0}^N m^2_{k}}\\
&=\frac{1-(1-m_0)^2-\sum_{k=0}^{N} \left(C_1x_1^{k}(1-x_1)+C_2x_2^{k}(1-x_2)\right)^2}{1+2\sum_{k=0}^N \left(C_1x_1^k+C_2x_2^k+1\right) -\sum_{k=0}^N \left(C_1x_1^k+C_2x_2^k+1\right)^2}\\
\intertext{Since $x_1x_2=1$, this can be written as follows.}
&=\frac{1-(1-m_0)^2-\sum_{k=0}^{N} \left(C_1^2\left(1-x_1\right)^2\left(x_1^2\right)^k+2C_1C_2(1-x_1)(1-x_2)+C_2^2(1-x_2)^2\left(x_2^2\right)^{k}\right)}{N+2 -\sum_{k=0}^N \left(C_1^2\left(x_1^2\right)^k+C_2^2\left(x_2^2\right)^k+2C_1C_2\right)}
\end{align*}
By organizing the terms we have (using identities (\ref{eq:long_sum_identity}), (\ref{eq:long_product_identity}), the polynomial (\ref{eq:x_definition}) that defines $x_1$ and $x_2$ and the definition $m_k = C_1x_1^k + C_2x_2^k + 1$)
\begin{align*}
0 &=\gamma(N+2)-1+(1-m_0)^2-2(N+1)C_1C_2(-(1-x_1)(1-x_2)+\gamma) \\
&\hspace{1cm} +C_1^2\left((1-x_1)^2-\gamma\right)\sum_{k=0}^N \left(x_1^2\right)^k + C_2^2\left((1-x_2)^2-\gamma\right)\sum_{k=0}^N \left(x_2^2\right)^k
\intertext{Using the fact that
$(1-x_1)(1-x_2)=2-(x_1+x_2)=\gamma$
and that 
$-\gamma(1+x_1)=(x_1+x_2-2)(1+x_1) 
= x_1^2 - x_1 - 1 + x_2 = (1-x_1)^2 - 2 + x_1 + x_2
= (1-x_1)^2 - \gamma$, we can simplify this as follows.}
&=\gamma(N+2)-1+(1-m_0)^2 -C_1^2\gamma(1+x_1)\sum_{k=0}^N \left(x_1^2\right)^k  - C_2^2\gamma(1+x_2)\sum_{k=0}^N \left(x_2^2\right)^k\\
&=\gamma(N+2)-1+(1-m_0)^2-C_1^2\gamma\frac{1-\left(x_1^2\right)^{N+1}}{1-x_1} - C_2^2\gamma\frac{1-\left(x_2^2\right)^{N+1}}{1-x_2}
\intertext{We will now put the last two terms
under the common denominator $(1-x_1)(1-x_2) = \gamma$,
thereby cancelling the $\gamma$ factors in both terms.}
&=\gamma(N+2) -1+(1-m_0)^2\\
&\hspace{1cm} -\left(C_1^2+C_2^2-C_1^2x_2-C_2^2x_1-C_1^2x_1^{2(N+1)}-C_2^2x_2^{2(N+1)}+C_1^2x_1^{2N+1}+C_2^2x_2^{2N+1}\right)
\intertext{Since $m_0 = C_0 + C_1 + 1$, we have}
&=\gamma(N+2)-1 +2C_1C_2\\
&\hspace{1cm} -(-C_1^2x_2-C_2^2x_1-C_1^2x_1^{2(N+1)}-C_2^2x_2^{2(N+1)}+C_1^2x_1^{2N+1}+C_2^2x_2^{2N+1})\\
&=\gamma(N+2)-1 +(m_{N+1}-1)^2
-(-C_1^2x_2-C_2^2x_1+C_1^2x_1^{2N+1}+C_2^2x_2^{2N+1})\\
&=\gamma(N+2)-1 +(m_{N+1}-1)^2-(m_N-1)(m_{N+1}-1)+(m_0-1)(m_{-1}-1)
\intertext{Note that we have $m_{-1}=1$ and $m_{N+1}=0$, hence}
&=\gamma(N+2) + m_N-1.
\end{align*}
Rearranging terms, we have:
\begin{align}
N &= \frac{1-m_N}{\gamma}-2 < 2\label{eq:gamma_N_relation}
\end{align}
The last inequality follows from the fact that
$m_N\in(0,1]$ and $\frac{1}{\gamma}\in[3,4]$. 
Thus the only possible values of $N$ are $0$ and $1$.

If $N=0$, we have the following set of equations. \footnote{We have used the algebra software \emph{Mathematica} to give the exact form of the solutions for both cases.}
\begin{align*}
\gamma(N+2) + m_N -1 = 2\gamma + C_1+C_2 &= 0 \\
C_1x_1^{-1}+C_2x_2^{-1}+1 &=1   &(m_{-1}=1)\\
C_1 x_1+C_2 x_2+1   &=0         &(m_1=0)
\end{align*}
whose solution is $\gamma =\frac{2-\sqrt{2}}{2}$ where $\frac{1}{\gamma}\approx 3.41421$.

If $N=1$, we have the following set of equations.
\begin{align}
\gamma(N+2) + m_N -1 = 3\gamma + C_1x_1+C_2x_2 &= 0\\ 
C_1x_1^{-1}+C_2x_2^{-1} + 1 &=1   &(m_{-1}=1)\label{eq:final_2}\\
C_1 x_1^2+C_2 x_2^2 + 1&=0    &(m_2=0)\label{eq:final_3}
\end{align}
whose solution gives $$\gamma \approx 0.299723, \text{ where } 3 \gamma^3-12 \gamma^2+10 \gamma-2=0 $$
We see in this case $\frac{1}{\gamma}\approx 3.33641$, which is better than the case where $N=0$. The corresponding $C_1\approx -0.264419-0.426908 i $ and $C_2 \approx -0.264419+0.426908 i$, which are a pair of conjugate roots of polynomial $76 x^6-532 x^5+664 x^4+3288 x^3+4680 x^2+2268 x+729$.

Finally we get 
\begin{align*}
&p_0 = 1- m_0=-C_1-C_2=\alpha\approx 0.528837 , \text{ where } \alpha^3+7 \alpha^2-21 \alpha+9=0,\\
&p_1 = 1-\frac{m_1}{m_0}=1-\frac{C_1x_1+C_2x_2+1}{C_1+C_2+1}=\beta \approx 0.785997, \text{ where } 4\beta^3-8\beta^2+3=0,\\
&p_2 = 1-\frac{m_2}{m_1}=1.
\end{align*}
This is precisely the {\bf {\sc max}-Contention Resolution} protocol.
\end{proof}
\end{appendices}

\end{document}